\newtheorem{Theorem}{Theorem}
\newtheorem{Proposition}{Proposition}
\newtheorem{Lemma}{Lemma}
\newtheorem{Problem}{Problem}
\newtheorem{Remark}{Remark}
\newtheorem{Assumption}{Assumption}
	\tikzstyle{frame} = [draw, -latex]
	\tikzstyle{line} = [draw]
	\tikzstyle{line2} = [draw, dashdotted]
	\tikzstyle{line3} = [draw, dashed]
	\tikzstyle{line3UD} = [draw, dashed]
	\tikzstyle{place} = [circle, draw=black, fill=white, thick, inner sep=2pt, minimum size=1mm]
	\tikzstyle{place2} = [circle, draw=black, fill=black, thick, inner sep=2pt, minimum size=1mm]
	\tikzstyle{placeRed} = [circle, draw=red, fill=red, thick, inner sep=2pt, minimum size=1mm]
	\tikzstyle{vertex} = [circle, draw=black, fill=black, thick, inner sep=2pt, minimum size=1mm]
\def\algbackskip{\hskip-\ALG@thistlm}
\title{\LARGE \bf Distributed traffic control for a large-scale urban network}
\author{Viet Hoang Pham$^{1}$, Kazunori Sakurama$^{2}$, Shaoshuai Mou$^{3}$ and Hyo-Sung Ahn$^{1}$ 
\thanks{This work was supported by the National Research Foundation of Korea (NRF) under the grant NRF-
2017R1A2B3007034}
\thanks{\small $^{1}$School of Mechanical Engineering, Gwangju Institute of Science and Technology, Gwangju, Korea. E-mails: {vietph@gist.ac.kr}; {hyosung@gist.ac.kr.}}
\thanks{\small $^{2}$Graduate School of Informatics, Kyoto University, Yoshida-honmachi, Sakyoku, Kyoto 606-8501 Japan. E-mail: {sakurama@i.kyoto-u.ac.jp}}
\thanks{\small $^{3}$School of Aeronautics and Astronautics, Purdue University, West Lafayette, IN 47906 USA. E-mail: {mous@purdue.edu}}
}
\begin{document}
\doublespacing
\maketitle 
\thispagestyle{empty}
\pagestyle{empty}

\begin{abstract}
Motivated by the fact that intelligent traffic control systems have become inevitable demand to cope with the risk of traffic congestion in urban areas, this paper develops a distributed control strategy for urban traffic networks.
Since these networks contain a large number of roads having different directions, each of them can be described as a multi-agent system.
Thus, a coordination among traffic flows is required to optimize the operation of the overall network.
In order to determine control decisions, we describe the objective of improving traffic conditions as a constrained optimization problem with respect to downstream traffic flows.
By applying the gradient projection method and the minimal polynomial of a matrix pair, we propose algorithms that allow each road cell to determine its control decision corresponding to the optimal solution while using only its local information.
The effectiveness of our proposed algorithms is validated by numerical simulations.
\end{abstract}
\section{Introduction} 
Traffic congestion becomes more complicated in urban areas and impacts negatively to economy, human health and environment \cite{GlenWeisbrod2001, KaiZhang2013}.
It is because urban areas are usually centers of population and economics.
The traffic demands increase dramatically while the road infrastructures are difficult or even impossible to extend. 
Dealing with this situation, the problem of urban traffic network control has attracted more and more attention from both the transportation research community and the control system society.
Since urban traffic networks usually consist of many roads and intersections, control decisions for an intersection influence to other intersections and roads.
It is also very difficult to predict future traffic behaviors because of inevitable uncertainty in the historical collected data (influenced by weather conditions, accidents, and drivers' decisions).
So, dynamical traffic control strategies are required to improve traffic conditions of the whole traffic network \cite{MarkosPapageorgiou2003, LeiChen2016}.

Among many possible methods to coordinate traffic flows through an intersection, regulating traffic lights is the most popular and easiest to implement in real practice.
In order to alleviate the risk of traffic congestion, many control approaches are proposed \cite{KonstantinosAmpountolas2009, KonstantinosAmpountolas2010, JackHaddad2010, SamuelCoogan2017, EricSKim2017, SteliosTimotheou2015, PietroGrandinetti2018}.
Their main targets are focused on optimizing some measure of interest (vehicle distribution, time delay and total throughput).
In these works, Cell Transmission Model (CTM) \cite{CarlosFDaganzo1994, CarlosFDaganzo1995} is used to establish a proper urban traffic network model since it is not only accurate in describing traffic's evolution over time and space but also simple to compute.

\begin{figure*}[t]
\begin{center}
\centering
\scalebox{0.68}{\begin{tikzpicture}[
textnode/.style={rectangle},
textlabel/.style={rectangle, scale=1.5},
squarenode/.style={rectangle, draw=black, fill=white,  very thick, minimum size=10mm},
ellipsenodeIn1/.style={ellipse, draw=black, fill=black, minimum height=4mm, minimum width= 2mm},
ellipsenodeIn2/.style={ellipse, draw=black, fill=black, minimum width=4mm, minimum height= 2mm},
ellipsenodeOut1/.style={ellipse, draw=black, fill=white, minimum height=4mm, minimum width= 2mm},
ellipsenodeOut2/.style={ellipse, draw=black, fill=white, minimum width=4mm, minimum height= 2mm},
roundnode/.style={circle, draw=black, fill=black!20,  very thick, minimum size=3mm},
]
\node at (2,5) [squarenode] (n01) {$I_1$};
\node at (5,5) [squarenode] (n02) {$I_2$};
\node at (2,2) [squarenode] (n03) {$I_3$};
\node at (5,2) [squarenode] (n04) {$I_4$};
\node at (1.7,7) [ellipsenodeIn2] (n1) {};
\node at (2.3,7) [ellipsenodeOut2] (n2) {};
\node at (4.7,7) [ellipsenodeIn2] (n3) {};
\node at (5.3,7) [ellipsenodeOut2] (n4) {};
\node at (0,5.3) [ellipsenodeOut1] (n5) {};
\node at (7,5.3) [ellipsenodeIn1] (n7) {};
\node at (0,4.7) [ellipsenodeIn1] (n8) {};
\node at (7,4.7) [ellipsenodeOut1] (n10) {};
\node at (0,2.3) [ellipsenodeOut1] (n15) {};
\node at (7,2.3) [ellipsenodeIn1] (n17) {};
\node at (0,1.7) [ellipsenodeIn1] (n18) {};
\node at (7,1.7) [ellipsenodeOut1] (n20) {};
\node at (1.7,0) [ellipsenodeOut2] (n21) {};
\node at (2.3,0) [ellipsenodeIn2] (n22) {};
\node at (4.7,0) [ellipsenodeOut2] (n23) {};
\node at (5.3,0) [ellipsenodeIn2] (n24) {};
\draw[->,{line width=3pt},black!40] (n1.south)--(n1.south|-n01.north);
\node at (1.5,6.2) [textnode] (r1) {$1$};
\draw[<-,{line width=3pt},black!40] (n2.south)--(n2.south|-n01.north);
\node at (2.5,6.2) [textnode] (r2) {$2$};
\draw[->,{line width=3pt},black!40] (n3.south)--(n3.south|-n02.north);
\node at (4.5,6.2) [textnode] (r3) {$3$};
\draw[<-,{line width=3pt},black!40] (n4.south)--(n4.south|-n02.north);
\node at (5.5,6.2) [textnode] (r4) {$4$};
\draw[<-,{line width=3pt},black!40] (n5.east)--(n5.east-|n01.west);
\node at (0.8,5.5) [textnode] (r5) {$5$};
\draw[<-,{line width=3pt},black!40, transform canvas={yshift=3mm}] (n01.east)->(n02.west);
\node at (3.5,5.5) [textnode] (r6) {$6$};
\draw[->,{line width=3pt},black!40] (n7.west)--(n7.west-|n02.east);
\node at (6.3,5.5) [textnode] (r7) {$7$};
\draw[->,{line width=3pt},black!40] (n8.east)--(n8.east-|n01.west);
\node at (0.8,4.5) [textnode] (r8) {$8$};
\draw[->,{line width=3pt},black!40, transform canvas={yshift=-3mm}] (n01.east)->(n02.west);
\node at (3.5,4.5) [textnode] (r9) {$9$};
\draw[<-,{line width=3pt},black!40] (n10.west)--(n10.west-|n02.east);
\node at (6.3,4.5) [textnode] (r10) {$10$};
\draw[<-,{line width=3pt},black!40, transform canvas={xshift=3mm}] (n01.south)->(n03.north);
\node at (1.5,3.5) [textnode] (r16) {$11$};
\draw[->,{line width=3pt},black!40, transform canvas={xshift=-3mm}] (n01.south)->(n03.north);
\node at (2.5,3.5) [textnode] (r19) {$12$};
\draw[<-,{line width=3pt},black!40, transform canvas={xshift=3mm}] (n02.south)->(n04.north);
\node at (4.5,3.5) [textnode] (r16) {$13$};
\draw[->,{line width=3pt},black!40, transform canvas={xshift=-3mm}] (n02.south)->(n04.north);
\node at (5.5,3.5) [textnode] (r19) {$14$};
\draw[<-,{line width=3pt},black!40] (n15.east)--(n15.east-|n03.west);
\node at (0.8,2.5) [textnode] (r15) {$15$};
\draw[<-,{line width=3pt},black!40, transform canvas={yshift=3mm}] (n03.east)->(n04.west);
\node at (3.5,2.5) [textnode] (r16) {$16$};
\draw[->,{line width=3pt},black!40] (n17.west)--(n17.west-|n04.east);
\node at (6.3,2.5) [textnode] (r17) {$17$};
\draw[->,{line width=3pt},black!40] (n18.east)--(n18.east-|n03.west);
\node at (0.8,1.5) [textnode] (r18) {$18$};
\draw[->,{line width=3pt},black!40, transform canvas={yshift=-3mm}] (n03.east)->(n04.west);
\node at (3.5,1.5) [textnode] (r19) {$19$};
\draw[<-,{line width=3pt},black!40] (n20.west)--(n20.west-|n04.east);
\node at (6.3,1.5) [textnode] (r10) {$20$};
\draw[<-,{line width=3pt},black!40] (n21.north)--(n21.north|-n03.south);
\node at (1.4,0.8) [textnode] (r1) {$21$};
\draw[->,{line width=3pt},black!40] (n22.north)--(n22.north|-n03.south);
\node at (2.6,0.8) [textnode] (r2) {$22$};
\draw[<-,{line width=3pt},black!40] (n23.north)--(n23.north|-n04.south);
\node at (4.4,0.8) [textnode] (r3) {$23$};
\draw[->,{line width=3pt},black!40] (n24.north)--(n24.north|-n04.south);
\node at (5.6,0.8) [textnode] (r4) {$24$};
\node at (3.5,-1.8) [textlabel] (label) {a) Graph representation of physical topology.};%

\node at (12.5,8) [roundnode] (nr01) {$1$};
\node at (14,8) [roundnode] (nr02) {$2$};
\node at (17,8) [roundnode] (nr03) {$3$};
\node at (18.5,8) [roundnode] (nr04) {$4$};
\node at (11,6.5) [roundnode] (nr05) {$5$};
\node at (15.5,6.5) [roundnode] (nr06) {$6$};
\node at (20,6.5) [roundnode] (nr07) {$7$};
\node at (11,5) [roundnode] (nr08) {$8$};
\node at (15.5,5) [roundnode] (nr09) {$9$};
\node at (20,5) [roundnode] (nr10) {$10$};
\node at (12.5,3.5) [roundnode] (nr11) {$11$};
\node at (14,3.5) [roundnode] (nr12) {$12$};
\node at (17,3.5) [roundnode] (nr13) {$13$};
\node at (18.5,3.5) [roundnode] (nr14) {$14$};
\node at (11,2) [roundnode] (nr15) {$15$};
\node at (15.5,2) [roundnode] (nr16) {$16$};
\node at (20,2) [roundnode] (nr17) {$17$};
\node at (11,0.5) [roundnode] (nr18) {$18$};
\node at (15.5,0.5) [roundnode] (nr19) {$19$};
\node at (20,0.5) [roundnode] (nr20) {$20$};
\node at (12.5,-1) [roundnode] (nr21) {$21$};
\node at (14,-1) [roundnode] (nr22) {$22$};
\node at (17,-1) [roundnode] (nr23) {$23$};
\node at (18.5,-1) [roundnode] (nr24) {$24$};
\draw[<->,{line width=1pt},black] (nr01)->(nr05);
\draw[<->,{line width=1pt},black] (nr01)->(nr06);
\draw[<->,{line width=1pt},black] (nr01)->(nr08);
\draw[<->,{line width=1pt},black] (nr01)->(nr09);
\draw[<->,{line width=1pt},black] (nr01)->(nr11);
\draw[<->,{line width=1pt},black] (nr01)->(nr12);
\draw[<->,{line width=1pt},black] (nr02)->(nr08);
\draw[<->,{line width=1pt},black] (nr02)->(nr12);
\draw[<->,{line width=1pt},black] (nr02)->(nr06);
\draw[<->,{line width=1pt},black] (nr03)->(nr06);
\draw[<->,{line width=1pt},black] (nr03)->(nr07);
\draw[<->,{line width=1pt},black] (nr03)->(nr09);
\draw[<->,{line width=1pt},black] (nr03)->(nr10);
\draw[<->,{line width=1pt},black] (nr03)->(nr13);
\draw[<->,{line width=1pt},black] (nr03)->(nr14);
\draw[<->,{line width=1pt},black] (nr04)->(nr07);
\draw[<->,{line width=1pt},black] (nr04)->(nr09);
\draw[<->,{line width=1pt},black] (nr04)->(nr14);
\draw[<->,{line width=1pt},black] (nr05)->(nr06);
\draw[<->,{line width=1pt},black] (nr05)->(nr12);
\draw[<->,{line width=1pt},black] (nr06)->(nr07);
\draw[<->,{line width=1pt},black] (nr06)->(nr08);
\draw[<->,{line width=1pt},black] (nr06)->(nr11);
\draw[<->,{line width=1pt},black] (nr06)->(nr12);
\draw[<->,{line width=1pt},black] (nr06)->(nr14);
\draw[<->,{line width=1pt},black] (nr07)->(nr09);
\draw[<->,{line width=1pt},black] (nr07)->(nr13);
\draw[<->,{line width=1pt},black] (nr07)->(nr14);
\draw[<->,{line width=1pt},black] (nr08)->(nr09);
\draw[<->,{line width=1pt},black] (nr08)->(nr11);
\draw[<->,{line width=1pt},black] (nr08)->(nr12);
\draw[<->,{line width=1pt},black] (nr09)->(nr10);
\draw[<->,{line width=1pt},black] (nr09)->(nr12);
\draw[<->,{line width=1pt},black] (nr09)->(nr13);
\draw[<->,{line width=1pt},black] (nr09)->(nr14);
\draw[<->,{line width=1pt},black] (nr10)->(nr14);
\draw[<->,{line width=1pt},black] (nr11)->(nr15);
\draw[<->,{line width=1pt},black] (nr11)->(nr16);
\draw[<->,{line width=1pt},black] (nr11)->(nr18);
\draw[<->,{line width=1pt},black] (nr11)->(nr19);
\draw[<->,{line width=1pt},black] (nr11)->(nr21);
\draw[<->,{line width=1pt},black] (nr11)->(nr22);
\draw[<->,{line width=1pt},black] (nr12)->(nr16);
\draw[<->,{line width=1pt},black] (nr12)->(nr18);
\draw[<->,{line width=1pt},black] (nr12)->(nr22);
\draw[<->,{line width=1pt},black] (nr13)->(nr16);
\draw[<->,{line width=1pt},black] (nr13)->(nr17);
\draw[<->,{line width=1pt},black] (nr13)->(nr19);
\draw[<->,{line width=1pt},black] (nr13)->(nr20);
\draw[<->,{line width=1pt},black] (nr13)->(nr23);
\draw[<->,{line width=1pt},black] (nr13)->(nr24);
\draw[<->,{line width=1pt},black] (nr14)->(nr17);
\draw[<->,{line width=1pt},black] (nr14)->(nr19);
\draw[<->,{line width=1pt},black] (nr14)->(nr24);
\draw[<->,{line width=1pt},black] (nr15)->(nr16);
\draw[<->,{line width=1pt},black] (nr15)->(nr22);
\draw[<->,{line width=1pt},black] (nr16)->(nr17);
\draw[<->,{line width=1pt},black] (nr16)->(nr18);
\draw[<->,{line width=1pt},black] (nr16)->(nr21);
\draw[<->,{line width=1pt},black] (nr16)->(nr22);
\draw[<->,{line width=1pt},black] (nr16)->(nr24);
\draw[<->,{line width=1pt},black] (nr17)->(nr19);
\draw[<->,{line width=1pt},black] (nr17)->(nr23);
\draw[<->,{line width=1pt},black] (nr17)->(nr24);
\draw[<->,{line width=1pt},black] (nr18)->(nr19);
\draw[<->,{line width=1pt},black] (nr18)->(nr21);
\draw[<->,{line width=1pt},black] (nr18)->(nr22);
\draw[<->,{line width=1pt},black] (nr19)->(nr20);
\draw[<->,{line width=1pt},black] (nr19)->(nr22);
\draw[<->,{line width=1pt},black] (nr19)->(nr23);
\draw[<->,{line width=1pt},black] (nr19)->(nr24);
\draw[<->,{line width=1pt},black] (nr20)->(nr24);
\node at (16,-1.8) [textlabel] (label) {b) Graph representation of communication topology.};
\end{tikzpicture}}
\end{center}
\caption{Graph representation for traffic network of $2\times 2$ intersections.}
\label{fig_4crossroad_graph}
\end{figure*}
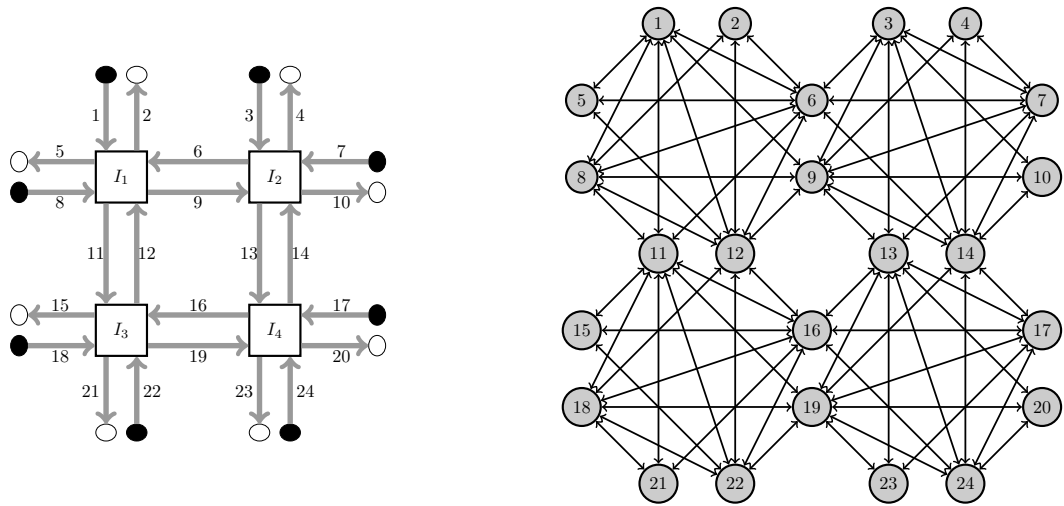

In consideration of the communication structure of controllers, traffic network management strategies are grouped into centralized \cite{KonstantinosAmpountolas2009, KonstantinosAmpountolas2010, JackHaddad2010, SamuelCoogan2017, EricSKim2017} or distributed \cite{SteliosTimotheou2015, PietroGrandinetti2018} systems.
The prominent characteristic of a centralized system is the existence of a crucial centralized controller, which gathers information of overall network to make control decisions.
On the contrary, a traffic network is controlled by the cooperation of many local controllers in the case of distributed setup.
Each local controller decides the control actions for a part of the network based on its local information.
To overcome the challenge pertaining to large-scale nature and uncertain collected data of an urban traffic network, distributed approaches are considered to be more effective than the centralized one. 
The distributed approaches have several key properties:
\begin{itemize}
\item \textit{moderate computation and information transfer load:} since local controllers require only their local information, the cost for information collecting is reduced and the risk of executing huge size data is avoided.
Moreover, the computation load of a local controller may not increase even when the size of the network increases.
\item \textit{dynamically reacting to changes:} since information transfer load is small, the delay time of information collecting is reduced significantly.
On-going traffic data affect quickly control decisions of local controllers.
Then the reliability and efficiency is increased in the distributed strategies.
\item \textit{scalable and robust:} it is not necessary to reprogram all local controllers when the traffic network structure varies.
In addition, even though a local controller fails, others are able to work independently.
\end{itemize}

In our conference version \cite{VietHoangPham2019}, we developed a distributed traffic control strategy for an urban traffic network consisting of many consecutive intersections. 
We formulated the traffic problem as a constrained Economic Dispatch Problem (EDP) and proposed a distributed algorithm to estimate an optimal distribution of vehicles for each road cell.
A consensus value computation technique proposed in \cite{ShreyasSundaram2007} was applied with the assumption that every road cell knows the minimal polynomial of the weighted matrix associated with the communication graph.
This paper is to generalize the work in \cite{VietHoangPham2019} by considering a general network and relaxing the requirement of information about overall graph.
We consider an urban traffic network whose physical graph is required to be weakly connected (i.e., its underlying graph is connected). \color{black}
To assess traffic conditions, we consider an objective function of vehicles distribution and downstream traffic flows of road cells in the considered traffic network.
The control variables we choose are downstream traffic flows since they are proportional to the active (green) time of roads and convenient for higher level controllers (for example, an intersection or an area) to synthesize their control decisions.
In order to guarantee the feasibility of control decisions, constraints for traffic volumes, traffic flows and traffic light operation are required to be satisfied.
Thus, we formulate control objective as a constrained optimization problem.  
Applying the gradient projection method and the properties of the minimal polynomial of a matrix pair, we propose a distributed online cooperative algorithm for every road cell to determine its control decision corresponding to the optimal solution for overall network.
We first provide a centralized approach solving the traffic control objective problem.
Taking the advantage of gradient projection method, we prove that the optimal solution is asymptotically achieved.
Then, we show how each road cell determines its control decision with only its local information. 
Based on minimal time consensus technique developed in \cite{ShreyasSundaram2007, YeYuan2009, YeYuan2013}, we propose an algorithm to compute the converged values of a time-invariant linear nonhomogeneous system in a minimum-time.
This minimum-time algorithm enhances a distributively implementable capacity of our proposed method.
Finally, we provide a distributed traffic control for urban networks.
\color{black}

We outline the remainder of this paper as follows. 
In Section II, we use CTM to describe urban traffic networks in multi-agent system perspective and formulate our interest as a constrained optimization problem in Section III.
By analyzing the primal problem and the dual problem, Section IV develops an iterative update for estimating the optimal solution using the gradient projection method.
We also provide a proof of convergence in this section.
In Section V, we first propose an algorithm for minimum-time final value computation and then provide a distributed traffic control for urban networks.
Section VI includes numerical simulations to validate our proposed algorithms and Section VII contains the concluding remarks.
\section{Urban traffic network model}
This paper employs Cell Transmission Model (CTM \cite{CarlosFDaganzo1994, CarlosFDaganzo1995}) to characterize traffic's evolution of road cells over time and space.
\subsection{Physical graph representation}
We use a directed graph $\mathcal{P}=(\mathcal{J},\mathcal{R})$ to illustrate the physical topology of the traffic network.
Road cells are conveniently identified with the links in the set $\mathcal{R} = \{1,2,\dots,N\}$. 
For a road cell $i \in \mathcal{R}$, we use $\sigma(i)$ and $\tau(i)$ to represent its source node and its sink node, respectively.
The direction of vehicles moving in this road cell is from $\sigma(i)$ to $\tau(i)$. 
Every node in the set $\mathcal{J} = \{I_1, \cdots, I_M\}$ corresponds to an intersection which is a cross-over of roads having different directions.
For convenience, we use $O$ to denote a virtual node representing the outside of the considered traffic network.
It includes external inputs from which the vehicles come and outputs to which the vehicles need to go.
Then $\sigma(i), \tau(i) \in \mathcal{J} \cup \{O\}$ for all $i \in \mathcal{R}$.
Let $\mathcal{N}_i^-$ and $\mathcal{N}_i^+$ denote the set of downstream neighbors and the set of upstream neighbors of road cell $i$:
\[\mathcal{N}_i^- = \{j\in\mathcal{R}:\tau(i) = \sigma(j)\},\]
\[ \textrm{ }\mathcal{N}_i^+ = \{j\in\mathcal{R}:\sigma(i) = \tau(j)\}.\] 
In other words, the set $\mathcal{N}_i^-$ consists of roads cells receiving vehicles from road cell $i$ and the set $\mathcal{N}_i^+$ includes all roads cells from which vehicles can move into the road cell $i$.
We use $\mathcal{I}_i$ to denote the set of road cells in $\mathcal{R}$ which have the same sink node as the road cell $i$ (including itself)
\[\mathcal{I}_i = \{j \in \mathcal{R}: \tau(j) = \tau(i),\}\]
i.e., road cells in $\mathcal{I}_i$ join to the same intersection.
For two road cells $j,k\in\mathcal{R}$, we say $k$ is reachable from $j$ or $j$ can reach to $k$ if there exists at least one directed sequence links $\{i_1, i_2,\dots,i_n\}$ such that $i_1=j, i_n = k$ and $\tau(i_t) = \sigma(i_{t+1}), t = 1,\dots,n-1$.
In this paper, we consider an urban traffic network whose physical topology satisfies the following assumption.
\begin{Assumption}
Any road cell $i\in\mathcal{R}$ is reachable from at least one source road cell $j\in\mathcal{R}$ where $\sigma(j) = O$; and any road cell $i\in\mathcal{R}$ can reach to at least one destination road cell $k\in\mathcal{R}$ where $\tau(k) = O$.
\label{as_graph}
\end{Assumption}

For better understanding of the physical topology representation and notations, we consider one example as shown in Fig. \ref{fig_4crossroad_graph}a.
The network consists of $24$ links and $4$ intersections, $\mathcal{R} = \{1,2,\dots,24\}$, $\mathcal{J} = \{I_1,I_2,I_3,I_4\}$.
We use gray arrows to illustrate roads in $\mathcal{R}$, use squares to represent internal intersections in $\mathcal{J}$ and use black/white circles to represent external inputs/outputs (node $O$) of the considered network.
In the figure, for example, road cells $1$ and $8$ receive exogenous inflow from outside into the network, $\sigma(1) = \sigma(8) = O$ and $\tau(1) = \tau(8) = I_1$; road cells $2$ and $5$ send vehicles to outside of the network, $\sigma(2) = \sigma(5) = I_1$ and $\tau(2) = \tau(5) = O$; road cells $6, 9, 11$ and $12$ are adjacent to intersections as $\sigma(6) = I_2$, $\sigma(12) = I_3$, $\sigma(9) = \sigma(9) = I_1$, $\tau(6) = \tau(12) = I_1$, $\tau(9) = I_2$ and $\tau(11) = I_3$.
For road cell $6$, we have $\mathcal{N}_6^- = \{2,5,11\}, \mathcal{N}_6^+ = \{3,7,14\}$ and $\mathcal{I}_6 = \{1, 6, 8, 12\}$.
\subsection{Dynamic model}
Let $t$ be the cycle index and assume all signalized intersections have a common cycle time $T$.
The discrete-time dynamical model of a road cell $i\in\mathcal{R}$ is given by the following conservation equation.
\begin{equation}
\rho_i(t+1)=\rho_i(t) + \sum\limits_{j \in \mathcal{N}_i^+}{\psi_{ji}(t)} + \mu_i(t) - \psi_i(t)
\label{eq_dynamicModel}
\end{equation}
where $\rho_i(t)$ is the traffic volume (the number of vehicles) of road cell $i$ at time $tT$, $\psi_{ji}(t)$ is the traffic flow departing from road cell $j \in \mathcal{N}_i^+$ and entering road cell $i$, $\mu_i(t)$ is the exogenous inflow entering road cell $i$ from outside of the considered network and $\psi_i(t) = \sum_{j \in \mathcal{N}_i^-}{\psi_{ij}}(t)$ is the downstream traffic flow leaving road cell $i$ in the duration time $[tT, (t+1)T]$. 

The turning preference matrix $\textbf{R}(t)=[r_{ij}(t)] \in \mathbb{R}^{N \times N}$ is defined by
\[\psi_{ij}(t) = r_{ij}(t)\psi_i(t), \textrm{ }i,j \in \mathcal{R}.\]
We have ${r_{ij}}(t) = \left\{ {\begin{array}{*{20}{c}}
{ \ge 0,}& {\tau(i) = \sigma(j)}\\{ = 0,}& {otherwise}
\end{array}} \right.\textrm{ and } \sum\limits_{k\in\mathcal{N}_i^-}{r_{ik}(t)}=1$.
\begin{Remark}
$\textbf{R}(t)$ is usually considered as known information at the beginning of each cycle of \eqref{eq_dynamicModel} (e.g., in \cite{SamuelCoogan2017, EricSKim2017, PietroGrandinetti2018}).
This assumption is acceptable since drivers usually tend to choose the lane depending on their turning choices, $r_{ij}(t)$ could be achieved empirically or by measurements.
The simplest estimation model for turning ratios is based on vehicle counts at the entrance and exit of lanes.
They can be also measured by turning ratio sensors (counting sensor, video sensors, Bluetooth and Wifi detector) installed on intersections \cite{ChangJenLan1999, MartinRodriguezVega2019, AshishBhaskar2015}.
Other strategies are the use of data recollection campaigns to construct a database of average routing behavior.
Besides that, various methods estimate the turning proportions relying on prior historical knowledge of arrival flows \cite{AfshinAbadi2015, SeunghyeonLee2015}.
\end{Remark}

Although recent advancement of sensing and information technology increases the high accuracy of collected data, it is impossible to predetermine the turning ratios and exogenous inflows exactly.
The assumption that these measurements depend on a given probability distribution is also impractical.
In this paper, we make assumption that they are given with small uncertainty.
Let $\underline{r}_{ij}(t)$ and $\overline{r}_{ij}(t)$ be the upper and lower bounds of possible values of turning split ratio $r_{ij}(t)$ and let $r_{ij}^*(t)$ be the nominal value measured or estimated by road cell $i$.
If there is no uncertainty or the measurement is accurate, $\underline{r}_{ij}(t) = \overline{r}_{ij}(t) = r_{ij}^*(t)$.
But the uncertainty cannot be avoided due to unpredictable situations and events such as weather conditions, accidents or drivers' decision change.
For the exogenous inflow $\mu_i(t)$, we define $\underline{\mu}_i(t)$, $\overline{\mu}_i(t)$ and $\mu_i^*(t)$ similarly.
\begin{Assumption}
At the beginning of every cycle $t$, each road cell $i \in \mathcal{R}$ can estimate $r_{ij}^*(t)$, $\underline{r}_{ij}(t)$, $\overline{r}_{ij}(t)$, $\mu_i^*(t)$, $\underline{\mu}_i(t)$, $\overline{\mu}_i(t)$ and
\begin{gather*}
P\left(r_{ij}(t) = r_{ij}^*(t)\right) \gg P\left(r_{ij}(t) \neq r_{ij}^*(t)\right)\\
P\left(r_{ij}(t) \in [\underline{r}_{ij}(t), \overline{r}_{ij}(t)]\right) = 1,\\
P\left(\mu_i(t) = \mu_i^*(t)\right) \gg P\left(\mu_i(t) \neq \mu_i^*(t)\right)\\
P\left(\mu_i(t) \in [\underline{\mu}_i(t), \overline{\mu}_i(t)]\right) = 1
\end{gather*}
where $P(x)$ is the the probability of $x$, $\overline{r}_{ij}(t)-\underline{r}_{ij}(t)$ and $\overline{\mu}_i(t)-\underline{\mu}_i(t)$ are small.
\label{as_parameters}
\end{Assumption}

In another words, we assume that each road cell $i$ estimates the turning split ratio of vehicles go from $i$ to $j$ as $r_{ij}^*(t)$ and its exogenous inflows as $\mu_{i}^*(t)$.
These values may be incorrect but the correct ones are in the ranges $[\underline{r}_{ij}(t), \overline{r}_{ij}(t)]$ and $[\underline{\mu}_i(t), \overline{\mu}_i(t)]$ respectively. \color{black} 
For every road cell $i$ where $\tau(i) \in \mathcal{J}$, $r_{ij}^*(t)$'s need to satisfy $\sum_{j\in\mathcal{N}_i^-}r_{ij}^*(t)=1$ but $\sum\limits_{j\in\mathcal{N}_i^-}\underline{r}_{ij}(t)<1$, $\sum\limits_{j\in\mathcal{N}_i^-}\overline{r}_{ij}(t)>1$.
\subsection{Traffic physical constraints}
To guarantee the feasibility, it is necessary to require traffic volumes, $\rho_i(t)$'s, and traffic downstream flows, $\psi_i(t)$'s traffic constraints for road capacity and traffic lights' operation.

\subsubsection{Road capacity constraints}
\begin{figure}[htb]
\begin{center}
\scalebox{0.8}{\begin{tikzpicture}
\draw[->] (0,0) -- (6,0) node[right] {$\rho$};
\draw[->] (0,0) -- (0,4.5) node[above] {};
\draw[scale=0.5,domain=0:10,smooth,variable=\x,red] plot ({\x},{8 -2/15*\x -1/15*\x*\x});
\draw[scale=0.5,domain=0:10,smooth,variable=\x,red] plot ({\x},{1.6*\x-0.08*\x*\x});
\draw[dashed] (5,0) -- (5,4);
\node at (5,-0.3) [] (ncongestion) {$\rho^{cg}$};
\draw[dashed] (2.39,0) -- (2.39,2.9);
\node at (2.39,-0.3) [] (ncritical) {$\rho^{cr}$};
\node at (4,4.2) [] (nd) {$d(\rho)$};
\node at (4,2) [] (ns) {$s(\rho)$};
\end{tikzpicture}}
\caption{The supply and demand functions, and flow capacity on a road cell $i$.}
\label{fig_diagram}
\end{center}
\end{figure}
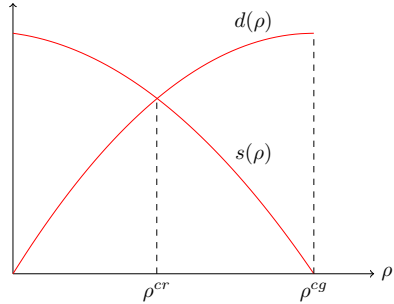
Let $d_i(\rho_i)$ be the demand function of road cell $i \in \mathcal{R}$ representing the upper bound of the downstream flow departing from $i$ when its volume is $\rho_i$, and let $s_i(\rho_i)$ be the supply function, which corresponds to the upper bound on the upstream flow entering into this road cell.
It is required that
\begin{align*}
\sum_{j\in\mathcal{N}_i^+}{\psi_{ji}(t)} + \mu_i(t) \le s_i(\rho_i(t)), \textrm{ } \psi_i(t) \le d_i(\rho_i(t)).
\end{align*}
Furthermore, the demand function $d_i$ is supposed to be non-decreasing and the supply function $s_i$ is supposed to be non-increasing (an example shown in Fig.\ref{fig_diagram}).
For each road cell $i$, there exists a congestion volume $\rho_i^{cg}$ such that $s_i(\rho_i)=0, \forall\rho_i\ge \rho_i^{cg}$ and a critical volume $\rho_i^{cr}$ where $s_i(\rho_i^{cr}) = d_i(\rho_i^{cr})$.
So, $\rho_i(t)$ needs to be smaller than its congestion volume and bigger than zero, i.e., $0 \le \rho_i(t) \le \rho_i^{cg}$ for all $t>0$.
Following Assumption \ref{as_parameters}, we have
\begin{align*}
&\sum\limits_{j\in\mathcal{N}_i^+}\underline{r}_{ji}(t){\psi_{j}(t)} \le \sum_{j\in\mathcal{N}_i^+}{\psi_{ji}(t)} \le \sum\limits_{j\in\mathcal{N}_i^+}\overline{r}_{ji}(t){\psi_{j}(t)}\\
&\rho_i(t+1) \ge \rho_i(t) + \underline{\mu}_i(t) + \sum\limits_{j\in\mathcal{N}_i^+} \underline{r}_{ji}(t)\psi_{j}(t) - \psi_{i}(t)\\
&\rho_i(t+1) \le \rho_i(t) + \overline{\mu}_i(t) + \sum\limits_{j\in\mathcal{N}_i^+} \overline{r}_{ji}(t)\psi_{j}(t) - \psi_{i}(t)
\end{align*}
Then, we need to have the physical constraint requirements as follows:
\begin{subequations}
\begin{gather}
\sum\limits_{j\in\mathcal{N}_i^+}\overline{r}_{ji}(t){\psi_{j}(t)} \le s_i(\rho_i(t)) - \overline{\mu}_i(t)\\
\rho_i(t) + \underline{\mu}_i(t) + \sum\limits_{j\in\mathcal{N}_i^+} \underline{r}_{ji}(t)\psi_{j}(t) - \psi_{i}(t) \ge 0\\
\rho_i(t) + \overline{\mu}_i(t) + \sum\limits_{j\in\mathcal{N}_i^+} \overline{r}_{ji}(t)\psi_{j}(t) - \psi_{i}(t) \le \rho_i^{cg}
\end{gather}
\label{eq_physical_constraints}
\end{subequations}

\subsubsection{Traffic light constraint}
Let $\omega_i$ be the average duration for one vehicle to exit from the road cell $i$ and enter to the intersection $\tau(i)$. 
So, for a downstream traffic flow $\psi_i(t)$, the green (active) duration $G_i(t)$ for the road cell $i$ in $t$-th cycle is required to satisfy
\[G_i(t) \ge \psi_i(t)\max_{j\in\mathcal{N}_i^-}\{r_{ij}(t)\}\omega_i.\]
Define $v_i(t) = \max_{j\in\mathcal{N}_i^-}\{r_{ij}(t)\}\omega_i$.
If $\tau(i) \in \mathcal{J}$, the set $\mathcal{I}_i$ consists of road cells which have different directions but use the same intersection.
So, the following inequality
\[\sum\limits_{j \in \mathcal{I}_i} \psi_j(t)v_j(t) \le T.\]
is required to hold since the road cells in $\mathcal{I}_i$ can not use the intersection $\tau(i)$ at the same time.
The fact $v_j(t)\psi_j(t) \le T$ implies $\psi_j(t) \le T(v_j(t))^{-1}$ for all $j$. \color{black}
\section{Problem statement} 
\subsection{Traffic objective}
The main target of large scale urban traffic network control problem is to minimize the following objective function:
\[\Phi(K) = \sum\limits_{t=1}^{K+1}{\sum\limits_{i=1}^{n}{\Phi_i(t)}}\]
where $\Phi_i(t)$ is the objective function of downstream traffic flow exiting from road cell $i$ and its traffic volume at $t$-th cycle and $K$ is the number of cycles.
From spatiotempral property of objective function, the problem of minimizing $\Phi(K)$ can be divided into $K$ sub-problems of finding $\rho_i(t)$'s and $\psi_i(t)$'s to be $\arg\min\sum_{i=1}^{N}\Phi_i(t)$ with $t = 1,\dots,K$.
The objective function $\Phi_i$ may be quadratic functions \cite{KonstantinosAmpountolas2009, KonstantinosAmpountolas2010} or have a linear form \cite{SteliosTimotheou2015}.
To have some physical meaning, $\Phi_i(t)$ could be chosen as $\left(\frac{\rho_i(t+1)}{\rho_i^{cg}}\right)^2$ to minimize the risk of congestion or as $T\rho_i(t+1)$ to minimize the total travel time.
One can also choose $\Phi_i(t) = k_i\psi_i(t)$ with negative weight $k_i$ to maximize downstream traffic flows.
Or we can use $\Phi_i(t) = \left( \rho_i(t+1) - \rho_i^{cr} \right)^2$ to achieve a given desired volume.
In this paper, we propose the following general objective function:   
\begin{equation}
\Phi_i(t) = a_i(\rho_i(t+1))^2 + b_i\rho_i(t+1) + c_i + w_i \psi_i(t)
\label{eq_cost}
\end{equation}
where $a_i, b_i, c_i, w_i$ are parameters known only by road cell $i$.

From the model \eqref{eq_dynamicModel}, we have the following matrix form equation 
\[\boldsymbol{\rho}(t+1) = \boldsymbol{\rho}(t) + \boldsymbol{\mu}(t) - \textbf{G}^T(t)\boldsymbol{\psi}(t)\]
where $\boldsymbol{\rho}(t) = [\rho_1(t),\cdots,\rho_N(t)]^T$, $\boldsymbol{\mu}(t) = [\mu_1(t),\cdots,\mu_N(t)]^T$, $\boldsymbol{\psi}(t) = [\psi_1(t),\cdots,\psi_N(t)]^T$ and the matrix $\textbf{G}(t) = [g_{ij}(t)] \in \mathbb{R}^{N \times N}$ is defined by
\begin{equation}
g_{ij}(t) = \left\{ {\begin{array}{*{20}{cl}} 1 & j=i\\ -r_{ij}(t) & j\in\mathcal{N}_i^-\\ 0 & otherwise \end{array}} \right.
\label{eq_coeff_vector}
\end{equation}
We have $\textbf{G}(t) = \textbf{I}_N - \textbf{R}(t)$ and it is a positive definite matrix because of Lemma \ref{lemma_turning_matrix}.
\begin{Lemma} 
The matrix $\textbf{R}(t)$ has the spectral radius less than $1$.
\label{lemma_turning_matrix}
\end{Lemma}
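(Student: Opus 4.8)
The plan is to combine a crude bound $\rho(\mathbf{R}(t))\le 1$ with a sharpening that invokes Assumption~\ref{as_graph}. First I would note that $\mathbf{R}(t)$ is entrywise nonnegative and that, for every road cell $i$, $\sum_{j}r_{ij}(t)=\sum_{j\in\mathcal{N}_i^-}r_{ij}(t)$ equals $1$ when $\tau(i)\in\mathcal{J}$ and equals $0$ when $\tau(i)=O$; in particular each row sum is at most $1$, so $\rho(\mathbf{R}(t))\le\|\mathbf{R}(t)\|_{\infty}=\max_i\sum_j r_{ij}(t)\le 1$. This bound alone is not enough, since $\mathbf{R}(t)$ is in general reducible (vehicles eventually leave the network), so the next step quantifies how much ``mass leaks out'' over several cycles.

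The key step is to show $\|\mathbf{R}(t)^N\|_{\infty}<1$. Let $\mathbf{1}$ be the all-ones vector and put $p^{(k)}:=\mathbf{R}(t)^k\mathbf{1}\ge 0$, so that $p^{(0)}=\mathbf{1}$ and $p^{(k+1)}=\mathbf{R}(t)\,p^{(k)}$; since $\mathbf{R}(t)\mathbf{1}\le\mathbf{1}$, an induction gives $p^{(k+1)}\le p^{(k)}\le\mathbf{1}$ componentwise, and hence once $p^{(k_0)}_i<1$ one has $p^{(k)}_i<1$ for all $k\ge k_0$. Each destination road cell $d$ (with $\tau(d)=O$) has a zero row, so $p^{(1)}_d=0<1$. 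By Assumption~\ref{as_graph}, every road cell $i$ can reach some destination road cell along a directed sequence of links; taking a simple such sequence its length is at most $N-1$, and feeding the strict inequality at $d$ backwards through the recursion $p^{(k+1)}_i=\sum_j r_{ij}(t)p^{(k)}_j$ link by link (each step uses a positive turning ratio along that path) yields $p^{(N)}_i<1$. Since $\mathcal{R}$ is finite, $\max_i p^{(N)}_i<1$, i.e. $\|\mathbf{R}(t)^N\|_{\infty}=\max_i p^{(N)}_i<1$.

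I would finish with the spectral identity $\rho(\mathbf{R}(t))^N=\rho(\mathbf{R}(t)^N)\le\|\mathbf{R}(t)^N\|_{\infty}<1$, so $\rho(\mathbf{R}(t))<1$. An alternative that avoids matrix powers is to assume $\rho(\mathbf{R}(t))=1$ and use Perron--Frobenius to get a nonnegative eigenvector $v\ge 0$, $v\ne 0$, with $\mathbf{R}(t)v=v$: on the set $S$ of indices where $v$ attains its maximum, the identity $v_i=\sum_j r_{ij}(t)v_j$ forces each $i\in S$ to have row sum $1$ and to route all of its flow into $S$, so $S$ contains no destination cell and no cell of $S$ can reach one along active turning movements, contradicting Assumption~\ref{as_graph}. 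The only place real care is needed is this backward-propagation / ``trapped set'' step: one must make the leakage uniform over all rows and make sure the path supplied by Assumption~\ref{as_graph} is traversed along links carrying positive turning ratios; the rest is routine manipulation of nonnegative matrices and the $\infty$-norm.
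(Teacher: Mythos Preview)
Your proposal is correct, and its structure is close in spirit to the paper's proof, though the execution differs. The paper argues as follows: Gershgorin gives $|\kappa|\le 1$ for every eigenvalue $\kappa$; assuming $|\kappa|=1$ with eigenvector $\mathbf{u}$, the inequality $|u_i|\le\sum_j r_{ij}|u_j|$ forces $|u_j|=|u_{i^*}|$ for every $j\in\mathcal{N}_{i^*}^-$ at a maximizing index $i^*$, and this propagates \emph{forward} along the reachable set until one hits a destination cell $j^*$ with zero row, giving $|u_{j^*}|=0$ and a contradiction. Your main argument instead works with $p^{(k)}=\mathbf{R}(t)^k\mathbf{1}$ and propagates the strict inequality \emph{backward} from destination cells to every road cell, concluding via $\rho(\mathbf{R})^N\le\|\mathbf{R}^N\|_\infty<1$. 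Your alternative argument (Perron--Frobenius eigenvector for eigenvalue $1$, trapped-set contradiction) is essentially the paper's method specialized to the real Perron eigenvalue, and is in fact slightly cleaner since it avoids handling complex eigenvectors and absolute values.

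Both proofs rely on the same substantive point, which you flag and the paper glosses over: the path from $i$ to a destination cell supplied by Assumption~\ref{as_graph} must carry strictly positive turning ratios at each step. The paper simply asserts ``$r_{i^*j}(t)>0$ if $j\in\mathcal{N}_{i^*}^-$'' without justification; your remark that this is where care is needed is well placed. Your matrix-power route has the advantage of being constructive (it exhibits a finite $N$ with $\|\mathbf{R}^N\|_\infty<1$, which also yields an explicit decay rate for the Jacobi iterations used later in the paper), while the paper's eigenvector route is slightly shorter once one accepts the positivity assumption.
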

\begin{proof} 
According to the Gerschgorin circles theorem, we have all eigenvalues of $\textbf{R}(t)$ lie within the union of the $N$ circles defined by
\[|z - r_{ii}(t)| \le - |r_{ii}(t)| + \sum\limits_{j = 1}^{N} |r_{ij}(t)|, i = 1, \dots, N\]
From the definition of the turning preference matrix, we have $r_{ii}(t) = 0$ for all $i \in \mathcal{R}$ and the sum $\sum_{j = 1}^{N} |r_{ij}(t)|$ equals to one if $\tau(i)\in\mathcal{J}$ or to zero if $\tau(i) = O$.
This means \[|\kappa| \le 1,\]
for every eigenvalue $\kappa$ of the matrix $\textbf{R}(t)$.
Let $\textbf{u} = \left[ {\begin{array}{*{20}{cl}} u_1 & u_2 & \dots & u_N \end{array}} \right]^T$ be the eigenvector corresponding to eigenvalue $\kappa$, we have 
\[r_{i1}(t)u_1 + r_{i2}(t)u_2 + \dots + r_{iN}(t)u_N = \kappa u_i\]
for all $i \in \mathcal{R}$.
Consider the case of $|\kappa| = 1$, then
\begin{align*}
|\kappa u_i| = |u_i| &= |r_{i1}(t)u_1 + r_{i2}(t)u_2 + \dots + r_{iN}(t)u_N|\\
&\le |r_{i1}(t)||u_1| + |r_{i2}(t)||u_2| + \dots + |r_{iN}(t)||u_N|
\end{align*}
Since $r_{ij}(t) \ge 0$ for all $i,j \in \mathcal{R}$, we have
\begin{equation}
|u_i| \le r_{i1}(t)|u_1| + r_{i2}(t)|u_2| + \dots + r_{iN}(t)|u_N|
\label{eq_absInequality}
\end{equation}
Let $i^*$ be the index where $|u_{i^*}| = \max\left\{ |u_j|, j \in \mathcal{R}\right\}$ and $|u_{i^*}| > 0$.
We have
\[|u_{i^*}| = r_{{i^*}1}(t)|u_{i^*}| + r_{{i^*}2}(t)|u_{i^*}| + \dots + r_{{i^*}N}(t)|u_{i^*}|.\]
because $\sum_{j=1}^{N} r_{i^*j}(t) = 1$.
It implies $\sum_{j=1}^{N} r_{i^*j}(t)(|u_j| - |u_{i^*}|) \le 0$.
Because $r_{i^*j}(t) > 0$ if $j \in \mathcal{N}_{i^*}^-$ and $r_{i^*j}(t) = 0$ if $j \notin \mathcal{N}_{i^*}^-$, the inequality \eqref{eq_absInequality} holds for $i^*$ if and only if $|u_j| = |u_{i^*}|$ for all $j \in \mathcal{N}_{i^*}^-$.
Then we generalize that for all $j$ which is reachable from $i^*$, we have $|u_j| = |u_{i^*}|$.

By Assumption \ref{as_graph}, there exists $j^{*}$ where $\tau(j^*) = O$ such that $|u_{j^*}| = |u_{i^*}|$.
Since $r_{j^*k}(t) = 0$ for all $k \in \mathcal{R}$, \eqref{eq_absInequality} implies $|u_{j^*}| \le 0$ or $u_j^* = 0$.
That means $\max\left\{ |u_i|, i \in \mathcal{R}\right\} = 0$ or $u_i = 0$ for all $i = 1, \dots, N$.
This contradicts with the assumption $\textbf{u}$ is an eigenvector of $\textbf{R}(t)$.

Thus, $|\kappa| < 1$ for all eigenvalues of $\textbf{R}(t)$.
\end{proof}
Since $\textbf{G}(t)$ is a nonsingular matrix, the desired traffic volumes of road cells can be reached by regulating their downstream traffic flows.
\subsection{Problem statement} 
Denoting $x_i = \rho_i(t+1)$, $f_i = \psi_i(t)$ and $x_i^0 = \rho_i(t)+\mu_i(t)$, the objective function at $t$-th cycle is given by
\begin{equation}
\Phi = \sum_{i=1}^{N}{ \left( a_ix_i^2 + b_ix_i + c_i  + w_i f_i \right)}
\label{eq_costfunction}
\end{equation}
and the dynamical model (\ref{eq_dynamicModel}) can be rewritten as
\begin{equation}
x_i = x_i^0 + \sum\limits_{j\in\mathcal{N}_i^+}{r_{ji}f_{j}} - f_i
\label{eq_dynamic_state}
\end{equation} 

From constraints \eqref{eq_physical_constraints} and the definition of demand function, we have constraints of road capacity as follows
\begin{align}
&f_i - \sum\limits_{j\in\mathcal{N}_i^+} \underline{r}_{ji}f_j \le \underline{x}_i \label{eq_volume_down}\\
&\sum\limits_{j\in\mathcal{N}_i^+} \overline{r}_{ji}f_j - f_i \le \overline{x}_i \label{eq_volume_up}\\
&0 \le f_i \le \overline{f}_i \label{eq_flow_updown}\\
&\sum\limits_{j\in\mathcal{N}_i^+} \overline{r}_{ji}f_j \le \overline{s}_i \label{eq_flowu}
\end{align}
and the constraint for traffic light of an intersection is rewritten as
\begin{equation}
\sum_{j\in\mathcal{I}_i}v_jf_j \le T \label{eq_light}
\end{equation}
where $\underline{r}_{ji} = \underline{r}_{ji}(t)$, $\overline{r}_{ji} = \overline{r}_{ji}(t)$, $r_{ij} = r_{ij}^*(t)$, $\underline{x}_i = \rho_i(k) + \underline{\mu}_i(k)$, $\overline{x}_i = \rho_i^{cg} - \rho_i(k) - \overline{\mu}_i(k)$,$\overline{f}_i = \min\left\{d_i(\rho_i(k)), Tv_i^{-1}\right\}$, $v_i = \max_{j\in\mathcal{N}_i^-}\{r_{ij}\}\omega_i$ and $\overline{s}_i = s_i(\rho_i(k))$.
It is noted that these parameters are constant for the given cycle $t$ and known to the road cell $i$.
The equality condition \eqref{eq_dynamic_state} is for nominal case of the traffic flow, without any uncertainties.
In order to consider the uncertainties in the estimation and measurements, the lower and upper bounds of turning ratios are used in \eqref{eq_volume_down}, \eqref{eq_volume_up} and \eqref{eq_flowu} to maintain the smooth operation of traffic network.

Let $\mathcal{D}$ be the set of $\{f_i\}_{i=1,\dots,N}$ where the inequalities \eqref{eq_volume_down}, \eqref{eq_volume_up}, \eqref{eq_flow_updown}, \eqref{eq_flowu}, \eqref{eq_light} hold strictly for all $i = 1, \dots, N$.
Then $\mathcal{D}$ contains the set of the downstream traffic flows which satisfy all physical constraints.
For a feasibility of solution, we make the following assumption.
\begin{Assumption}
The set $\mathcal{D}$ is nonempty.
\label{as_feasibility}
\end{Assumption}

Formally, the objective of traffic control problem is to solve the following constrained optimization problem at the beginning of $t$-th cycle
\begin{equation}
\begin{split}
\min\limits_{x_1,\dots,x_N,f_1,\dots,f_{N}}& \Phi \textrm{ in }\eqref{eq_costfunction} \\
\textrm{s.t. }& \eqref{eq_dynamic_state}, \eqref{eq_volume_down}, \eqref{eq_volume_up}, \eqref{eq_flow_updown}, \eqref{eq_flowu}, \eqref{eq_light}
\end{split}
\label{eq_problem1}
\end{equation}

Since traffic networks are usually large scale and collected traffic data include inevitable uncertainties, traffic behaviors are very difficult to predict.
Consequently, effective traffic control strategies are required not only to be able to improve traffic behavior but also to be smartly reacting to changes in the local regions.
Compared to a centralized control system, a distributed architecture achieves better scalability and robustness.
In this setup, road cells need to communicate together to optimize overall cooperation of whole traffic network.
From equations (\ref{eq_dynamic_state}-\ref{eq_light}), one road cell $i \in \mathcal{R}$ requires its own information and its neighbors' to check feasibility constraints of decision control variable $f_i$.
Thus, it is is natural to make the following assumption on communication topology.
\begin{Assumption}
The road cell $i$ can communicate with other road cells in the sets $\mathcal{N}_i^- \cup \mathcal{N}_i^+ \cup \mathcal{I}_i$.
\label{as_communicate}
\end{Assumption}
Let $\mathcal{G} =(\mathcal{V},\mathcal{E})$ be the communication graph of the traffic network where $\mathcal{V}$ is the set of local controllers and $\mathcal{E}$ characterizes communication links in the traffic network.
Matching to the physical topology $\mathcal{P} = (\mathcal{R}, \mathcal{J})$, we have the node set $\mathcal{V} = \mathcal{R}$ and the edge set $\mathcal{E} = \left\{(i,j): i,j\in \mathcal{V}, \sigma(j) = \tau(i) \textrm{ or } \tau(j) = \sigma(i) \textrm{ or } \tau(j) = \tau(i)\right\}$ for the communication graph (see Fig.\ref{fig_4crossroad_graph}b).

The main purpose of this paper is sated as follows.
\begin{Problem}
For a given traffic network satisfying Assumption \ref{as_graph}, \ref{as_parameters}, \ref{as_feasibility} and \ref{as_communicate}, develop a control strategy which can be implemented for every road cell $i\in\mathcal{R}$ such that it uses only local information from $j\in\mathcal{N}_i^- \cup \mathcal{N}_i^+ \cup \mathcal{I}_i$ to determine its control variables $f_i$ corresponding to the optimal solution of the problem \eqref{eq_problem1}
\label{prob1}
\end{Problem}
\section{Gradient projection method}
\subsection{Primal problem and equivalent dual problem}
The Lagrangian function of the problem \eqref{eq_problem1} is given by
\begin{align*}
\mathcal{L} =& \sum_{i=1}^{N}{ \left( a_ix_i^2 + b_ix_i + c_i + w_i f_i \right) } + \sum_{i=1}^{N} \zeta_i \left( x_i^0 + \sum\limits_{j\in\mathcal{N}_i^+} r_{ji}f_j - f_i - x_i \right)\\
& + \sum_{i=1}^{N} { \lambda_i \left( f_i - \sum\limits_{j\in\mathcal{N}_i^+} \underline{r}_{ji}f_j - \underline{x}_i \right)} + \sum_{i=1}^{N} { \theta_i \left( -f_i + \sum\limits_{j\in\mathcal{N}_i^+} \overline{r}_{ji}f_j - \overline{x}_i \right)}\\
&  + \sum_{i=1}^{N} { \alpha_i \left( - f_i \right)} + \sum_{i=1}^{N} { \beta_i \left( f_i - \overline{f}_i \right)} + \sum_{i=1}^{N} { \nu_i \left( \sum\limits_{j\in\mathcal{N}_i^+} \overline{r}_{ji}f_j - \overline{s}_i\right)} + \sum_{k=1}^{N} \gamma_i \left(\sum_{j\in\mathcal{I}_i}v_jf_j - T\right)
\end{align*}
where $\zeta_i, \lambda_i, \theta_i, \alpha_i, \beta_i, \nu_i$ and $\gamma_k$ are Lagrange multipliers.
We denote $\textbf{x} = [x_1, \cdots, x_N]^T$, $\textbf{f} = [f_1, \cdots,  f_N]^T$ and stack the multipliers into vectors as $\boldsymbol{\zeta} = [\zeta_1, \cdots, \zeta_N]^T$ and $\boldsymbol{\eta} = [\boldsymbol{\eta}_1^T, \cdots, \boldsymbol{\eta}_N^T]^T$ where $\boldsymbol{\eta}_i = [\lambda_i, \theta_i, \alpha_i, \beta_i, \nu_i, \gamma_i^T$ for $i = 1,\dots,N$, then the Lagrangian function can be described in the form of $\mathcal{L} = \mathcal{L}(\textbf{x}, \textbf{f}, \boldsymbol{\zeta}, \boldsymbol{\eta})$.
It is noteworthy that the partial derivative $\nabla_{\boldsymbol{\zeta}} \mathcal{L}$ coincides with the equality constraint \eqref{eq_dynamic_state} and does not depend on multiplier $\boldsymbol{\eta}$, the partial derivative $\nabla_{\boldsymbol{\eta}} \mathcal{L}$ corresponds to the inequality constraint \eqref{eq_volume_down}-\eqref{eq_light} and does not depend on multiplier $\boldsymbol{\zeta}$.
So, we have
\begin{align*}
\mathcal{L}(\textbf{x}, \textbf{f}, \boldsymbol{\zeta}, \boldsymbol{\eta}) = \boldsymbol{\Phi}(\textbf{x},\textbf{f}) + \boldsymbol{\zeta}^T \nabla_{\boldsymbol{\zeta}} \mathcal{L}(\textbf{x}, \textbf{f}, \boldsymbol{\zeta}) + \boldsymbol{\eta}^T\nabla_{\boldsymbol{\eta}} \mathcal{L}(\textbf{x}, \textbf{f}, \boldsymbol{\eta})
\end{align*}

For simplicity, denote $\textbf{x}^0 = [x_1^0, \cdots, x_N^0]^T$, $\textbf{A} = diag(a_1,\dots,a_N)$, $\textbf{b} = [b_1,\dots,b_N]^T$, $\textbf{c} = [c_1,\dots,c_N]^T$ and $\textbf{w} = [w_1,\dots,w_N]^T$, we can write $\textbf{x} = \textbf{x}^0 - \textbf{G}^T\textbf{f}$ and $\Phi = \textbf{x}^T\textbf{A}\textbf{x} + \textbf{b}^T\textbf{x} + \textbf{c}^T + \textbf{w}^T\textbf{f}$.
Then, we have
\begin{align*}
\Phi &= \left(\textbf{x}^0 - \textbf{G}^T\textbf{f}\right)^T\textbf{A}\left(\textbf{x}^0 - \textbf{G}^T\textbf{f}\right) + \textbf{b}^T\left(\textbf{x}^0 - \textbf{G}^T\textbf{f}\right) + \textbf{c}^T + \textbf{w}^T\textbf{f}\\
&= \textbf{f}^T\textbf{G}\textbf{A}\textbf{G}^T\textbf{f} + \left(-2(\textbf{x}^0)^T\textbf{A}\textbf{G}^T - \textbf{b}^T\textbf{G}^T + \textbf{w}\right)\textbf{f} + (\textbf{x}^0)^T\textbf{A}\textbf{x}^0 + \textbf{c}^T
\end{align*}
Observe that the objective function is also a quadratic function of downstream traffic flows $\textbf{f}$.
Moreover, the constraint functions in \eqref{eq_problem1} are affine.
So, \eqref{eq_problem1} has a unique optimal solution under Assumption \ref{as_feasibility}.
Let $\textbf{x}^{opt}, \textbf{f}^{opt}, \boldsymbol{\zeta}^{opt}$ $\boldsymbol{\eta}^{opt}$ be the optimal solution of \eqref{eq_problem1}, we have the following necessary and sufficient conditions for optimality:
\begin{gather}
\nabla_{\textbf{x}} \mathcal{L}(\textbf{x}^{opt}, \textbf{f}^{opt}, \boldsymbol{\zeta}^{opt}, \boldsymbol{\eta}^{opt}) = \textbf{0}, \label{eq_KKT1}\\
\nabla_{\textbf{f}} \mathcal{L}(\textbf{x}^{opt}, \textbf{f}^{opt}, \boldsymbol{\zeta}^{opt}, \boldsymbol{\eta}^{opt}) = \textbf{0}, \label{eq_KKT2}\\
\nabla_{\boldsymbol{\zeta}} \mathcal{L}(\textbf{x}^{opt}, \textbf{f}^{opt},\boldsymbol{\zeta}^{opt}) = \textbf{0}, \label{eq_KKT3}\\
\nabla_{\boldsymbol{\eta}} \mathcal{L}(\textbf{x}^{opt}, \textbf{f}^{opt},\boldsymbol{\eta}^{opt}) \le \textbf{0}, \label{eq_KKT4}\\
\boldsymbol{\eta}^{opt} \ge \textbf{0}, \label{eq_KKT5}\\
diag\left(\boldsymbol{\eta}^{opt}\right)\nabla_{\boldsymbol{\eta}} \mathcal{L}(\textbf{x}^{opt}, \textbf{f}^{opt}, \boldsymbol{\eta}^{opt}) = \textbf{0} \label{eq_KKT6}
\end{gather}
The above equations are Karush–Kuhn–Tucker conditions of the problem \eqref{eq_problem1}.
The equations \eqref{eq_KKT1} and \eqref{eq_KKT2} are stationary conditions.
The derivative of the Lagrangian function with respect to variables $x_i$'s and $f_i$'s are required to be zero at the optimal solution.
The equations \eqref{eq_KKT3} and \eqref{eq_KKT4} are equality and inequality constraints.
For the multipliers corresponding to inequality constraints, they are necessary to be nonnegative and satisfy the complementary slackness condition \eqref{eq_KKT6}.
In \eqref{eq_KKT6}, $diag\left(\boldsymbol{\eta}^{opt}\right)$ is a diagonal matrix whose main diagonal terms consist of elements of the vector $\boldsymbol{\eta}^{opt}$.

Let $\mathcal{F}$ be a subset of $\mathcal{R}^{2N}$ where the equality constraint \eqref{eq_dynamic_state} hold for all $i = 1, \dots, N$.
\[\mathcal{F} = \{(\textbf{x}, \textbf{f}): \textbf{x} = \textbf{x}^0 - \textbf{G}^T\textbf{f}\}\]
It is no doubt that $\mathcal{F}$ is convex and the following constrained optimization is equivalent to the problem \eqref{eq_problem1}
\begin{equation}\label{eq_problem_equiprimal}
\begin{split}
\min\limits_{(\textbf{x}, \textbf{f}) \in \mathcal{F}}& \Phi(\textbf{x}, \textbf{f}) \textrm{ in }\eqref{eq_costfunction} \\
\textrm{s.t. }& \eqref{eq_volume_down}, \eqref{eq_volume_up}, \eqref{eq_flow_updown}, \eqref{eq_flowu}, \eqref{eq_light}
\end{split}
\end{equation} 
We have the Lagrangian function of the problem \eqref{eq_problem_equiprimal} as
\begin{align*}
\mathcal{L}^e(\textbf{x}, \textbf{f}, \boldsymbol{\eta}) = \Phi(\textbf{x},\textbf{f}) + \boldsymbol{\eta}^T\nabla_{\boldsymbol{\eta}} \mathcal{L}(\textbf{x}, \textbf{f}, \boldsymbol{\eta})
\end{align*}
and define the dual function $\Psi(\boldsymbol{\eta}) = \inf_{(\textbf{x}, \textbf{f}) \in \mathcal{F}}\mathcal{L}^e(\textbf{x}, \textbf{f}, \boldsymbol{\eta})$.
Then, the dual problem corresponging to the constrained optimization problem \eqref{eq_problem_equiprimal} is
\[\max\limits_{\boldsymbol{\eta} \ge \textbf{0}}\Psi(\boldsymbol{\eta})\]
Under Assumption \ref{as_feasibility}, \eqref{eq_problem_equiprimal} satisfies Assumption \ref{as_ConvexAndInterior}.
According to Proposition \ref{prob_strongdual} given in the Appendix-A, there exists at least one Lagrange multiplier $\boldsymbol{\eta}^{*}$ such that
\[\boldsymbol{\eta}^{*} = \arg\max_{\boldsymbol{\eta}\ge\textbf{0}}\Psi(\boldsymbol{\eta}) \ge \textbf{0},\textrm{ and } \Phi^{*} = \inf_{(\textbf{x}, \textbf{f}) \in \mathcal{F}}\mathcal{L}^e(\textbf{x}, \textbf{f},  \boldsymbol{\eta}^{*})\]
where $\Phi^{*}$ is the optimal value of \eqref{eq_problem_equiprimal}.
Moreover, if 
\begin{equation}\label{eq_dualproperty}
diag\left(\boldsymbol{\eta}^{*}\right)\nabla_{\boldsymbol{\eta}} \mathcal{L}^e (\textbf{x}^{*}, \textbf{f}^{*}, \boldsymbol{\eta}^{*}) = \textbf{0}
\end{equation}
where $(\textbf{x}^{*},\textbf{f}^{*}) = \arg\min_{(\textbf{x}, \textbf{f}) \in \mathcal{F}}{\mathcal{L}^e (\textbf{x}, \textbf{f}, \boldsymbol{\eta}^{*})}$, then $(\textbf{x}^{*},\textbf{f}^{*})$ is the optimal solution of the primal problem \eqref{eq_problem_equiprimal} as stated in Proposition \ref{prob_optimalsolution} (see Appendix-A).
That means $(\textbf{x}^{*}, \textbf{f}^{*}) = (\textbf{x}^{opt}, \textbf{f}^{opt})$ because the problem \eqref{eq_problem1} has a unique optimal solution and so is the problem \eqref{eq_problem_equiprimal}.
Since $\nabla_{\boldsymbol{\zeta}} \mathcal{L}(\textbf{x}, \textbf{f}, \boldsymbol{\zeta}) = \textbf{0}$ for all $(\textbf{x}, \textbf{f}) \in \mathcal{F}$, we have $\inf_{(\textbf{x}, \textbf{f}) \in \mathcal{F}}\mathcal{L}^e(\textbf{x}, \textbf{f}, \boldsymbol{\eta}) \equiv \inf_{(\textbf{x}, \textbf{f}) \in \mathcal{F}, \boldsymbol{\zeta}}\mathcal{L}(\textbf{x}, \textbf{f}, \boldsymbol{\zeta}, \boldsymbol{\eta})$ for all $\boldsymbol{\eta}$.
It implies
\begin{equation}\label{eq_dualproblem1}
\boldsymbol{\eta}^{*} = \arg\max_{\boldsymbol{\eta}\ge\textbf{0}}\inf_{(\textbf{x}, \textbf{f}) \in \mathcal{F}, \boldsymbol{\zeta}}\mathcal{L}(\textbf{x}, \textbf{f}, \boldsymbol{\zeta}, \boldsymbol{\eta})
\end{equation}
and there is a vector $\boldsymbol{\zeta}^*$ such that
\begin{equation}\label{eq_dualproblem2}
(\textbf{x}^{*},\textbf{f}^{*}, \boldsymbol{\zeta}^*) = \arg\min_{(\textbf{x}, \textbf{f}) \in \mathcal{F}, \boldsymbol{\zeta}}{\mathcal{L} (\textbf{x}, \textbf{f}, \boldsymbol{\zeta}, \boldsymbol{\eta}^{*})}
\end{equation}
\subsection{Finding Lagrangian mulitplier based on the gradient projection method}
Fixing $\boldsymbol{\eta}$ and taking the derivative of the Lagrangian function with respect to the traffic volumes, the traffic downstream flows and Lagrangian multipliers corresponding to equality constraints, we have
\begin{subequations}
\begin{align}
\frac{\partial \mathcal{L}}{\partial x_i} =& 2a_ix_i + b_i - \zeta_i\\
\frac{\partial \mathcal{L}}{\partial f_i} =&  - \zeta_i + \sum\limits_{j\in\mathcal{N}_i^-}r_{ij}\zeta_j + h_i(\boldsymbol{\eta})\\
\frac{\partial \mathcal{L}}{\partial \zeta_i} =& x_i^0 + \sum\limits_{j\in\mathcal{N}_i^+} r_{ji}f_j - f_i - x_i
\end{align}
\label{eq_Lagrang_derivative1}
\end{subequations}
where $h_i(\boldsymbol{\eta}) = w_i + \lambda_i - \sum\limits_{j\in\mathcal{N}_i^-} \underline{r}_{ij}\lambda_j - \theta_i + \sum\limits_{j\in\mathcal{N}_i^-} \overline{r}_{ij}\theta_j - \alpha_i + \beta_i + \sum\limits_{j\in\mathcal{N}_i^-}\overline{r}_{ij}\nu_j + \sum\limits_{j\in\mathcal{I}_i}v_j\gamma_j$.

Let $\textbf{x}^*(\boldsymbol{\eta})$, $\textbf{f}^*(\boldsymbol{\eta})$ and $\boldsymbol{\zeta}^*(\boldsymbol{\eta})$ be the optimal solution of $\inf_{(\textbf{x}, \textbf{f}) \in \mathcal{F}}\mathcal{L}(\textbf{x}, \textbf{f}, \boldsymbol{\zeta}, \boldsymbol{\eta})$ for a given $\boldsymbol{\eta}$.
From the optimality conditions which are obtained by equalizing the right-hand sides of \eqref{eq_Lagrang_derivative1} to zero, we have
\begin{align}
\textbf{G}\boldsymbol{\zeta}^*(\boldsymbol{\eta}) &= \textbf{h}(\boldsymbol{\eta}) \label{eq_relationship2}\\
x_i^*(\zeta_i^*) &= \frac{\zeta_i^*-b_i}{2a_i} \label{eq_relationship1}\\
\textbf{G}^T\textbf{f}^*(\boldsymbol{\eta}) &= \textbf{x}^0 - \textbf{x}^*(\boldsymbol{\eta}) \label{eq_relationship3}
\end{align}
where $\textbf{h}(\boldsymbol{\eta}) = [h_1(\boldsymbol{\eta}), \cdots, h_N(\boldsymbol{\eta})]^T \in \mathbb{R}^N$. 
From the definition of $h_i$, we can rewrite $\textbf{h} = \textbf{H}\boldsymbol{\eta} + \textbf{w}$ with a proper matrix $\textbf{H}$.
Since $\textbf{G}$ is a nonsingular matrix, the optimal traffic flow vector can be described in a linear form:
\begin{equation}
\textbf{f}^*(\boldsymbol{\eta}) = \textbf{G}^{-T}(\textbf{x}^0 - \textbf{x}^*(\boldsymbol{\eta})) = \textbf{P}\boldsymbol{\eta} + \textbf{p}
\end{equation}
where $\textbf{P} = -0.5\textbf{G}^{-T} \textbf{A}^{-1} \textbf{G}^{-1}\textbf{H}$ and $\textbf{p} = \textbf{G}^{-T}\textbf{x}^0 + 0.5\textbf{G}^{-T}\textbf{A}^{-1}\textbf{b} - \textbf{G}^{-T}\textbf{w}$.
It is easy to check that all the element matrices in the matrix $\textbf{P}$ have finite norm; so we have its norm $\delta = ||\textbf{P}|| < \infty$.
Thus, it holds
\[||\textbf{f}^*(\boldsymbol{\eta}^{(1)}) - \textbf{f}^*(\boldsymbol{\eta}^{(2)})|| = ||\textbf{P}(\boldsymbol{\eta}^{(1)} - \boldsymbol{\eta}^{(2)})|| \le \delta ||\boldsymbol{\eta}^{(1)} - \boldsymbol{\eta}^{(2)}||.\]

Since $\textbf{x}^*(\boldsymbol{\eta})$, $\textbf{f}^*(\boldsymbol{\eta})$, $\boldsymbol{\zeta}^*(\boldsymbol{\eta})$ are continuous function with respect to $\boldsymbol{\eta}$, we have $\Psi(\boldsymbol{\eta}) = \mathcal{L}(\textbf{x}^*(\boldsymbol{\eta}), \textbf{f}^*(\boldsymbol{\eta}), \boldsymbol{\zeta}^*(\boldsymbol{\eta}), \boldsymbol{\eta})$
is also continuous and 
\begin{align*}
\nabla_{\boldsymbol{\eta}}\Psi(\boldsymbol{\eta}) =& \frac{\partial \mathcal{L}(\textbf{x}^*(\boldsymbol{\eta}))}{\partial \textbf{x}} \frac{\partial \textbf{x}^*(\boldsymbol{\eta})}{\partial \boldsymbol{\eta}} + \frac{\partial \mathcal{L}(\textbf{f}^*(\boldsymbol{\eta}))}{\partial \textbf{f}} \frac{\partial \textbf{f}^*(\boldsymbol{\eta})}{\partial \boldsymbol{\eta}} + \frac{\partial \mathcal{L}(\boldsymbol{\zeta}^*(\boldsymbol{\eta}))}{\partial \boldsymbol{\zeta}(\boldsymbol{\eta})} \frac{\partial \boldsymbol{\zeta}^*(\boldsymbol{\eta})}{\partial \boldsymbol{\eta}} + \nabla_{\boldsymbol{\eta}} \mathcal{L}(\boldsymbol{\eta})
\end{align*}
From \eqref{eq_relationship2}, \eqref{eq_relationship1} and \eqref{eq_relationship3}, we have $\frac{\partial \mathcal{L}}{\partial \textbf{x}} (\textbf{x}^*(\boldsymbol{\eta})) = \frac{\partial \mathcal{L}}{\partial \textbf{f}} (\textbf{f}^*(\boldsymbol{\eta})) = \frac{\partial \mathcal{L}}{\partial \boldsymbol{\zeta}} (\boldsymbol{\zeta}^*(\boldsymbol{\eta})) = 0$.
This implies $\nabla \Psi(\boldsymbol{\eta}) = \nabla_{\boldsymbol{\eta}} \mathcal{L}(\boldsymbol{\eta})$.
The partial derivatives of the Lagrangian function with respect to multipliers corresponding to the following inequality constraints are given as follows:
\begin{subequations}\label{eq_Lagrang_derivative2}
\begin{align}
\frac{\partial \mathcal{L}}{\partial \gamma_i} =& \sum_{j\in\mathcal{I}_i}v_jf_j^* - T\\
\frac{\partial \mathcal{L}}{\partial \nu_i} =& \sum_{j\in\mathcal{N}_i^+}{\overline{r}_{ji}f_j^*} - \overline{s}_i\\
\frac{\partial \mathcal{L}}{\partial \lambda_i} =& f_i^* - \sum\limits_{j\in\mathcal{N}_i^+} \underline{r}_{ji}f_j^* - \underline{x}_i\\
\frac{\partial \mathcal{L}}{\partial \theta_i} =& -f_i^* + \sum\limits_{j\in\mathcal{N}_i^+} \overline{r}_{ji}f_j^* - \overline{x}_i\\
\frac{\partial \mathcal{L}}{\partial \alpha_i} =& - f_i^*\\
\frac{\partial \mathcal{L}}{\partial \beta_i} =& f_i^* - \overline{f}_i
\end{align}
\end{subequations}
So, $\nabla \Psi(\boldsymbol{\eta})$ also has the linear form of $\nabla \Psi(\boldsymbol{\eta}) = \textbf{Q}\textbf{f}^*(\boldsymbol{\eta}) + \textbf{q}$ where $||\textbf{Q}|| = \varrho < \infty$.
The following inequality holds
\begin{align*}
||\nabla \Psi(\boldsymbol{\eta}^{(1)}) - \nabla \Psi(\boldsymbol{\eta}^{(2)})|| &\le \varrho ||\textbf{f}^*(\boldsymbol{\eta}^{(1)}) - \textbf{f}^*(\boldsymbol{\eta}^{(2)})|| \le \varrho\delta ||\boldsymbol{\eta}^{(1)} - \boldsymbol{\eta}^{(2)}||
\end{align*}
Moreover, the function $\Psi(\boldsymbol{\eta})$ is concave or $-\Psi(\boldsymbol{\eta})$ is convex.
According to Proposition \ref{prob_gradientproject},
under the gradient projection update law \eqref{eq_gradientprojection}
\begin{equation}
\boldsymbol{\eta}(k+1) = \max\left\{\textbf{0}, \boldsymbol{\eta}(k) + \epsilon \nabla \Psi(\boldsymbol{\eta}(k)) \right\} \label{eq_gradientprojection}
\end{equation}
with a step size $\epsilon < \frac{2}{\varrho\delta}$, we have $\lim_{k \rightarrow \infty}\boldsymbol{\eta}(k) = \boldsymbol{\eta}^*$ where 
\begin{equation}\label{eq_confirm}
\begin{split}
\boldsymbol{\eta}^* &= \max\left\{\textbf{0}, \boldsymbol{\eta}^* + \epsilon \nabla \Psi(\boldsymbol{\eta}^*) \right\}\\
& = \max\left\{\textbf{0}, \boldsymbol{\eta}^* + \epsilon \nabla_{\boldsymbol{\eta}} \mathcal{L}(\textbf{x}(\boldsymbol{\eta}^*), \textbf{f}(\boldsymbol{\eta}^*), \boldsymbol{\eta}^*) \right\}
\end{split}
\end{equation}
From the above analysis, we propose the update rule \eqref{eq_centralized_flow}-\eqref{eq_centralized_multipliers} to estimate the optimal solution $\textbf{f}^{opt}$.
\begin{align}
\textbf{f}(k+1) &= \textbf{P}\boldsymbol{\eta}(k) + \textbf{p} \label{eq_centralized_flow}\\
\boldsymbol{\eta}(k+1) &= \max\left\{\textbf{0}, \boldsymbol{\eta}(k) + \epsilon \left(\textbf{Q}\textbf{f}(k+1) + \textbf{q}\right)\right\} \label{eq_centralized_multipliers}
\end{align}

\begin{Theorem}\label{th_centralized}
An optimal solution of the constrained minimization problem \eqref{eq_problem1} is asymptotically achieved by applying the update rule \eqref{eq_centralized_flow}-\eqref{eq_centralized_multipliers} with a sufficiently small $\epsilon$ in the sense:
\[\textbf{f}^{opt} = \lim_{k \rightarrow \infty}\textbf{f}(k).\]
\end{Theorem}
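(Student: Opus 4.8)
The plan is to recognize the coupled iteration \eqref{eq_centralized_flow}--\eqref{eq_centralized_multipliers} as a reformulation of the projected gradient ascent \eqref{eq_gradientprojection} on the dual function $\Psi$, to identify its limit with a Karush--Kuhn--Tucker point of \eqref{eq_problem1}, and then to push the limit through the affine map $\boldsymbol{\eta}\mapsto\textbf{f}^*(\boldsymbol{\eta})$.

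First I would observe that, since $\nabla\Psi(\boldsymbol{\eta})=\textbf{Q}\textbf{f}^*(\boldsymbol{\eta})+\textbf{q}$ and \eqref{eq_centralized_flow} gives $\textbf{f}(k+1)=\textbf{f}^*(\boldsymbol{\eta}(k))$, the multiplier update \eqref{eq_centralized_multipliers} is exactly $\boldsymbol{\eta}(k+1)=\max\{\textbf{0},\boldsymbol{\eta}(k)+\epsilon\nabla\Psi(\boldsymbol{\eta}(k))\}$, i.e. \eqref{eq_gradientprojection}. The hypotheses needed to apply Proposition \ref{prob_gradientproject} are then in place: $-\Psi$ is convex, $\nabla\Psi$ is Lipschitz with constant $\varrho\delta$ (from $\textbf{f}^*(\boldsymbol{\eta})=\textbf{P}\boldsymbol{\eta}+\textbf{p}$ together with the linearity of $\nabla\Psi$ in $\textbf{f}^*$), the dual optimal set is nonempty by Proposition \ref{prob_strongdual} under Assumption \ref{as_feasibility}, and $\epsilon<2/(\varrho\delta)$. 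Hence $\boldsymbol{\eta}(k)\to\boldsymbol{\eta}^*$ with $\boldsymbol{\eta}^*$ satisfying the fixed-point identity \eqref{eq_confirm}.

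Next I would show that the tuple $(\textbf{x}^*,\textbf{f}^*,\boldsymbol{\zeta}^*,\boldsymbol{\eta}^*)$, with $(\textbf{x}^*,\textbf{f}^*,\boldsymbol{\zeta}^*)$ the minimizer in \eqref{eq_dualproblem2}, satisfies all KKT conditions \eqref{eq_KKT1}--\eqref{eq_KKT6} of \eqref{eq_problem1}. Conditions \eqref{eq_KKT1}--\eqref{eq_KKT3} hold by construction: minimizing $\mathcal{L}$ over $(\textbf{x},\textbf{f})\in\mathcal{F}$ and over $\boldsymbol{\zeta}$ is precisely the stationarity system \eqref{eq_relationship2}, \eqref{eq_relationship1}, \eqref{eq_relationship3}, which encodes $\nabla_{\textbf{x}}\mathcal{L}=\nabla_{\textbf{f}}\mathcal{L}=\nabla_{\boldsymbol{\zeta}}\mathcal{L}=\textbf{0}$. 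For the rest, I read \eqref{eq_confirm} componentwise: setting $g_\ell=[\nabla\Psi(\boldsymbol{\eta}^*)]_\ell=[\nabla_{\boldsymbol{\eta}}\mathcal{L}(\textbf{x}^*,\textbf{f}^*,\boldsymbol{\eta}^*)]_\ell$, the scalar equation $\eta^*_\ell=\max\{0,\eta^*_\ell+\epsilon g_\ell\}$ forces $g_\ell=0$ when $\eta^*_\ell>0$ and $g_\ell\le 0$ when $\eta^*_\ell=0$. This yields simultaneously $\boldsymbol{\eta}^*\ge\textbf{0}$ (\eqref{eq_KKT5}), $\nabla_{\boldsymbol{\eta}}\mathcal{L}(\textbf{x}^*,\textbf{f}^*,\boldsymbol{\eta}^*)\le\textbf{0}$ (\eqref{eq_KKT4}, i.e. primal feasibility of the inequality constraints), and $diag(\boldsymbol{\eta}^*)\nabla_{\boldsymbol{\eta}}\mathcal{L}(\textbf{x}^*,\textbf{f}^*,\boldsymbol{\eta}^*)=\textbf{0}$ (\eqref{eq_KKT6}).

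Finally, since $\Phi$ is a convex quadratic in $(\textbf{x},\textbf{f})$ and every constraint of \eqref{eq_problem1} is affine, the KKT conditions are sufficient for global optimality, so $(\textbf{x}^*,\textbf{f}^*)$ solves \eqref{eq_problem1}; by the uniqueness of that solution noted earlier, $\textbf{f}^*=\textbf{f}^{opt}$. Because $\textbf{f}(k+1)=\textbf{f}^*(\boldsymbol{\eta}(k))$, $\textbf{f}^*$ is affine hence continuous, and $\boldsymbol{\eta}(k)\to\boldsymbol{\eta}^*$, we obtain $\lim_{k\to\infty}\textbf{f}(k)=\textbf{f}^*(\boldsymbol{\eta}^*)=\textbf{f}^{opt}$. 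I expect the main obstacle to be the middle step: extracting dual feasibility and complementary slackness from the projected-gradient fixed point \eqref{eq_confirm}, and making sure that the primal pair $(\textbf{x}^*,\textbf{f}^*)$ attached to $\boldsymbol{\eta}^*$ is genuinely primal feasible, since in general a Lagrangian minimizer evaluated at a dual-optimal point need not be; here it works because \eqref{eq_confirm} also encodes \eqref{eq_KKT4}. One should also note that the argument is insensitive to a possible non-uniqueness of $\boldsymbol{\eta}^*$, because every fixed point of the projected-gradient map produces, by the same reasoning, the unique primal optimum.
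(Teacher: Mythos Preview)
Your proposal is correct and follows essentially the same route as the paper: identify \eqref{eq_centralized_flow}--\eqref{eq_centralized_multipliers} with the projected gradient step \eqref{eq_gradientprojection}, invoke the Lipschitz bound $\varrho\delta$ on $\nabla\Psi$ to get $\boldsymbol{\eta}(k)\to\boldsymbol{\eta}^*$, read the fixed-point identity \eqref{eq_confirm} componentwise to obtain \eqref{eq_KKT4}--\eqref{eq_KKT6}, and combine with \eqref{eq_relationship2}--\eqref{eq_relationship3} for \eqref{eq_KKT1}--\eqref{eq_KKT3}. Your write-up is in fact slightly more careful than the paper's (you explicitly pass the limit through the affine map $\boldsymbol{\eta}\mapsto\textbf{f}^*(\boldsymbol{\eta})$ and remark on possible non-uniqueness of $\boldsymbol{\eta}^*$), but the argument is the same.
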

\begin{proof}
In \eqref{eq_centralized_flow}, $\textbf{f}(k+1) = \textbf{f}^*(\boldsymbol{\eta}(k))$ is the optimal solution of the subproblem $\min_{(\textbf{x},\textbf{f})\in\,\mathcal{F}, \boldsymbol{\zeta}}\mathcal{L}(\textbf{x},\textbf{f},\boldsymbol{\zeta},\boldsymbol{\eta}(k))$.
So, \eqref{eq_centralized_multipliers} is coincided with the gradient projection update law \eqref{eq_gradientprojection}.
As $k$ goes to infinite, $\boldsymbol{\eta}(k)$ converges to $\boldsymbol{\eta}^{*}$.

Let $\eta_i^*$ be the $i$-th element of the vector $\boldsymbol{\eta}^*$ and let $\nabla_i$ be the $i$-th element of the vector $\nabla_{\boldsymbol{\eta}} \mathcal{L}(\textbf{x}(\boldsymbol{\eta}^*), \textbf{f}(\boldsymbol{\eta}^*), \boldsymbol{\eta}^*)$.
From \eqref{eq_confirm}, we have $\eta_i^* \ge 0$ and 
\begin{equation}\label{eq_result}
\nabla_i = \max\{0, \eta_i^*+\epsilon\nabla_i\}, \epsilon>0
\end{equation}
for all $i$.
If $\eta_i^* > 0$, the equation \eqref{eq_result} holds if and only if $\nabla_i = 0$.
If $\eta_i^* = 0$, the equation \eqref{eq_result} holds if and only if $\nabla_i \le 0$.
These facts imply that $\eta_i^*\nabla_i = 0$ and $\nabla_i \le 0$ for all $i$.
So, $(\textbf{x}^{*}(\boldsymbol{\eta}^{*}), \textbf{f}^{*}(\boldsymbol{\eta}^{*}), \boldsymbol{\zeta}^{*}(\boldsymbol{\eta}^{*}), \boldsymbol{\eta}^{*})$ satisfies \eqref{eq_KKT4}, \eqref{eq_KKT5},\eqref{eq_KKT6}.

In addition, the equations \eqref{eq_relationship1}, \eqref{eq_relationship2}, \eqref{eq_relationship3} guarantee the condition \eqref{eq_KKT1}, \eqref{eq_KKT2},\eqref{eq_KKT3}, respectively. 
Thus, all KKT conditions are satisfied or $(\textbf{x}^{*}(\boldsymbol{\eta}^{*}), \textbf{f}^{*}(\boldsymbol{\eta}^{*}), \boldsymbol{\zeta}^{*}(\boldsymbol{\eta}^{*}), \boldsymbol{\eta}^{*})$ is the optimal solution of \eqref{eq_problem1}. 
\end{proof}

For more detail, we provide Algorithm \ref{alg_Centralized} as the process of the centralized method to find the optimal solution of \eqref{eq_problem1}.
The stopping criteria can be chosen as 
\[||\boldsymbol{\eta}(k+1)-\boldsymbol{\eta}(k)|| < \Delta\]
for a sufficiently small $\Delta$.
\begin{algorithm}[htb]
\begin{algorithmic}[1]
\BState \textbf{Input}: Parameters of road cells $i \in \mathcal{R}$: $a_i, b_i, w_i, d_i$ and $r_{ij}, \underline{r}_{ij}, \overline{r}_{ij}$, $j\in\mathcal{N}_i^+$.
\BState \textbf{Output}: Optimal solution $\textbf{f}^{opt}$.
\BState \emph{Initialization}:  
\State $\quad$ Construct the matrices $\textbf{P}, \textbf{Q}$ and vectors $\textbf{p}, \textbf{q}$.
\State $\quad$ $\boldsymbol{\eta} \gets \textbf{0}$ and choose $\epsilon < \frac{2}{||\textbf{P}||.||\textbf{Q}||}$.
\BState \emph{Iterative update}:
\State $\quad$\textbf{while} the stopping criteria is not satisfied \textbf{do}
\State $\qquad$ $\textbf{f}^* \gets \textbf{P}\boldsymbol{\eta} + \textbf{p}$
\State $\qquad$ $\boldsymbol{\eta} \gets \max\left\{\textbf{0}, \boldsymbol{\eta} + \epsilon \left(\textbf{Q}\textbf{f}^* + \textbf{q}\right)\right\}$
\State $\quad$\textbf{end while}
\BState \emph{Finish algorithm}: $\textbf{f}^{opt} \gets \textbf{f}^*$
\end{algorithmic}
\caption{Centralized algorithm to solve Problem \ref{prob1}.}
\label{alg_Centralized}
\end{algorithm}
\section{Distributed algorithm for traffic control}
Before describing a distributed version of Algorithm \ref{alg_Centralized}, we first design a distributed algorithm which can be applied for a finite-time solving \eqref{eq_relationship2} and \eqref{eq_relationship3} to reduce the running time of the while loops.
\subsection{Minimum-time final value computation}
The analysis in this subsection is similar to the minimal time consensus method presented in \cite{ShreyasSundaram2007, YeYuan2009, YeYuan2013}.
However, different from these works, we are not restrict to a consensus problem where discrete state matrix is a stochastic matrix. 
Instead, we consider the discrete-time model
\begin{subequations}\label{eq_dis_dynamics}
\begin{align}
\textbf{x}(l+1) &= \textbf{M} \textbf{x}(l) + \textbf{m}, t \ge 1 \\
y_r(l) &= \textbf{e}_n^T[r] \textbf{x}(l).
\end{align}
\end{subequations}
where $\textbf{M} \in \mathbb{R}^{n \times n}$ is strictly stable, $\textbf{m} \in \mathbb{R}^n$ is a constant vector and $\textbf{e}_n[r]$ is an $n$-dimensional vector with all elements being zero except the $r$-th element being one.
In \eqref{eq_dis_dynamics}, $\textbf{x} \in \mathbb{R}^n$ is a variable vector and $y_r$ is an observer corresponding to the $r$-th element in $\textbf{x}$. 
Let $\textbf{z}(l) = \textbf{x}(l+1) - \textbf{x}(l)$ be the difference between two consecutive iterations.
From \eqref{eq_dis_dynamics}, it is easy to verify that
\[\textbf{z} (l+1) = \textbf{M}\textbf{z} (l)\]
Notice that $z_r(l+i) = \textbf{e}_n^T[r] \textbf{M}^i\textbf{z}(k) = y_r(l+i+1)-y_r(l+i)$.
Since $\textbf{M}$ has only stable eigenvalues, we have $\lim_{l \rightarrow \infty} \textbf{z} (l) = \textbf{0}_n$.
It implies the convergence for $y_r(l)$, i.e., $y_r^{\infty} = \lim_{l\rightarrow\infty}y_r(l)$, exists.

For a matrix pair $[\textbf{M},\textbf{e}_n^T[r]]$, there exists the minimal polynomial $q_r(t) = t^{\varsigma_r+1} + \sum_{i=0}^{\varsigma_r}{\Theta_i t^i}$ with minimum degree $\varsigma_r + 1 \le n$ that satisfies $\textbf{e}_n^T[r] q_r(\textbf{M}) = \textbf{0}_n^T$.
So, we have
\[\textbf{e}_n^T[r] q_r(\textbf{M})\textbf{z}(l) = 0 = \textbf{e}_n^T[r]\sum_{i=0}^{\varsigma_r+1}{\Theta_i \textbf{M}^i}\textbf{z}(l), \textrm{ } \Theta_{\varsigma_r+1} = 1.\]
The above equation is equivalent to 
\begin{align*}
0 = & \Theta_0 [y_r(l+1)-y_r(l)] + \Theta_1 [y_r(l+2)-y_r(l+1)] + \cdots + \Theta_{d_r}[y_r(l+\varsigma_r+1)-y_r(l+\varsigma_r)] + \\
& + [y_r(l+\varsigma_r+2)-y_r(l+\varsigma_r+1)]
\end{align*}
or ${y_r(l)}\Theta_0 + {y_r(l+1)}\Theta_1 + \cdots + {y_r(l+\varsigma_r+1)}\Theta_{\varsigma_r+1} = const$ for all $l \ge 0$.
Thus, the final value of observer \eqref{eq_dis_dynamics}b is computed as
\begin{equation}
y_r^{\infty} = \frac{{y_r(1)}\Theta_0 + {y_r(2)}\Theta_1 + \cdots + {y_r(\varsigma_r+2)}\Theta_{\varsigma_r+1}}{\Theta_0 + \Theta_1 + \cdots + \Theta_{\varsigma_r+1}}. \label{eq_final_value}
\end{equation}
Let $q(t)$ be the polynomial of the matrix $\textbf{M}$.
We have $q(t) = \prod_{i = 1}^{n}(t-\kappa_i)$ where $\kappa_i$ is an eigenvalue of $\textbf{M}$.
Then one is not a root of $q(t)$.
In \cite{ShreyasSundaram2007}, the minimal polynomial $q_r(t)$ divides the minimal polynomial of $\textbf{M}$.
That means all roots of $q_r(t)$ are also roots of $q(t)$.
It guarantees that all roots of $q_r(t)$ are different from one or the denominator of \eqref{eq_final_value} is nonzero, $q_r(1) = \Theta_0 + \Theta_1 + \cdots + \Theta_{\varsigma_r+1} \neq 0$.

\begin{algorithm}[htb]
\begin{algorithmic}[1]
\BState \textbf{Input}: Successive observations of $y_r(i), l = 0,1,\dots$.
\BState \textbf{Output}: Final value $y_r^\infty$.
\BState \emph{Initialization}: $l=1$
\[\textbf{Y} \gets \frac{1}{{z}_r(0)}, \textbf{c} \gets {z}_r(1), s \gets {z}_r(2)-\frac{({z}_r(1))^2}{{z}_r(0)}.\] 
\BState \emph{First defective Hankel detection loop}:
\State $\quad$\textbf{while} $s \neq 0$ \textbf{do}
\State $\qquad$ $l \gets l+1$
\State $\qquad$ $\textbf{Y} \gets \left[ {\begin{array}{*{20}{c}} \textbf{Y}  + \textbf{Y}\textbf{c}s^{-1}\textbf{c}^T \textbf{Y} & - \textbf{Y}\textbf{c} s^{-1} \\ -s^{-1}\textbf{c}^T \textbf{Y} & s^{-1} \end{array}} \right]$
\State $\qquad$ $\textbf{c} \gets \left[ {\begin{array}{*{20}{c}} {z}_r(l) & \cdots & {z}_r(2l-1) \end{array}} \right]^T$
\State $\qquad$ $s \gets {z}_r(2l)-\textbf{c}^T\textbf{Y}\textbf{c}$
\State $\quad$\textbf{end while}
\BState \emph{Coefficients computation} $\Theta \gets \left[ {\begin{array}{*{20}{c}} -\textbf{Y}\textbf{c}\\ 1 \end{array}} \right]$.
\BState \emph{Final value computation} 
\State $\quad$ $k \gets 0, D_r\gets l-2$, $\gamma \gets \Theta$, $y_r^\infty \gets$ \eqref{eq_final_value}
\end{algorithmic}
\caption{Minimum-time computation of the final value.}
\label{alg_min_cal}
\end{algorithm}

In \cite{YeYuan2013}, Yuan et al. developed an algorithm to calculate the coefficients $\Theta$'s using only observations $y_r$'s.
The key idea is to find the first defective Hankel matrix given in \eqref{eq_Hankel}.
\begin{equation}
\textbf{H}_{{z}_r}[l] = \left[ {\begin{array}{*{20}{c}} {z}_r(0) & {z}_r(1) & \cdots & {z}_r(l)\\ {z}_r(1) & {z}_r(2) & \dots & {z}_r(l+1)\\ \vdots & \vdots & \ddots & \vdots\\ {z}_r(l) & {z}_r(l+1) & \dots & {z}_r(2l) \end{array}} \right]
\label{eq_Hankel}
\end{equation} 
We here provide Algorithm \ref{alg_min_cal} to determine the first index $l$ where $\textbf{H}_{{z}_r}[l]$ looses its rank and the corresponding normalized kernel $\boldsymbol{\Theta}$.
That means
\begin{align}
rank (\textbf{H}_{{z}_r}[l])& = rank (\textbf{H}_{{z}_r}[l+h]), \textrm{ } h = 1,2,\dots\\
&\textbf{H}_{{z}_r}[l]\boldsymbol{\Theta} = \textbf{0}.
\end{align}
Denote $\textbf{c}_l  = \left[ {\begin{array}{*{20}{c}} {{z}_r(l)}&{{z}_r(l+1)}& \cdots &{{z}_r(2l-1)} \end{array}} \right]^T$ and $\textbf{Z}_l = \textbf{H}_{{z}_r}[l]$, we have
\begin{equation}
\textbf{Z}_l = \left[ {\begin{array}{*{20}{c}} \textbf{Z}_{l-1} & \textbf{c}_{l} \\ \textbf{c}_{l}^T & {z}_r(2l) \end{array}} \right] \textrm{ for } l>1
\label{eq_Hankel_block}
\end{equation} 
Consider the Schur complement $s_l = {z}_r(2l) - \textbf{c}_l^T \textbf{Z}_{l-1}^{-1} \textbf{c}_l$, it is easy to verify that if $s_k \neq 0$ and $\textbf{Z}_{k-1}$ is full rank, then the matrix $\textbf{Z}_k$ is nonsingular and its inverse matrix is given as
\begin{equation}
\textbf{Z}_{l}^{-1} = \left[ {\begin{array}{*{20}{c}} \textbf{Z}_{l-1}^{-1} (\textbf{I}_l  + \textbf{c}_{l}s_{l}^{-1}\textbf{c}_{l}^T \textbf{Z}_{l-1}^{-1}) & -\textbf{Z}_{l-1}^{-1}\textbf{c}_{l} s_l^{-1} \\ -s_l^{-1}\textbf{c}_{l}^T\textbf{Z}_{l-1}^{-1} & s_l^{-1} \end{array}} \right]
\label{eq_invert_Hankel}
\end{equation}
It implies that $\textbf{Z}_{l}$ will not loose the rank until $s_l=0$.
Or, the minimum index $l^*$ where $s_{l^*}=0$ determines the first defective Hankel matrix.
From \eqref{eq_Hankel_block}, we have 
\begin{equation}
\left[ {\begin{array}{*{20}{c}} \Theta_0& \Theta_1 & \cdots & \Theta_{l^*-1} \end{array}} \right]^T = -\textbf{Z}_{l^*-1}^{-1} \textbf{c}_{l^*}
\end{equation}
sine $\textbf{Z}_{l^*}\boldsymbol{\Theta} = \textbf{0}$ and $\Theta_{l^*} = 1$.
\begin{Remark}
Since Algorithm \ref{alg_min_cal} requires only observations $y_r(l)$, it can run in parallel with the updated process \eqref{eq_dis_dynamics}.
\end{Remark}
\subsection{Main algorithm}
Recalling that $\textbf{G} = \textbf{I}-\textbf{R}$ where $\textbf{R}$ is a strictly stable matrix, the equations \eqref{eq_relationship2} and \eqref{eq_relationship3} can be solved by applying the Jacobi method as follows.
\begin{align*}
\boldsymbol{\zeta}(l+1) &= \textbf{R}\boldsymbol{\zeta}(l) + \textbf{h}(\boldsymbol{\eta}(k))\\
\textbf{f}(l+1) &= \textbf{R}^T\textbf{f}(l) + \textbf{x}^0 - \textbf{x}^*(\boldsymbol{\eta}(k))
\end{align*}
We have $\lim_{l\rightarrow\infty} \boldsymbol{\zeta}(l) = \boldsymbol{\zeta}^*(\boldsymbol{\eta}(k))$ and $\lim_{k\rightarrow\infty} \textbf{f}(l) = \textbf{f}^*(\boldsymbol{\eta}(k))$.
Another advantage of the Jacobi method is allowing the road cells to choose initial estimation states arbitrarily.
The detailed formulations of this application for each road cell $i \in \mathcal{R}$ are given as
\begin{align}
\zeta_i(l+1) &= \sum\limits_{j\in\mathcal{N}_i^-}r_{ij}\zeta_j(l) + h_i(\boldsymbol{\eta}(k)) \label{eq_zetaUpdate}\\
f_i(l+1) &= \sum\limits_{j\in\mathcal{N}_i^+}r_{ji}f_j(l) + x_i^0 - x_i^*(\boldsymbol{\eta}(k)) \label{eq_flowUpdate}
\end{align}
Now from the determined multipliers corresponding to equality constraints $\zeta_i^*(\eta(k))$'s and downstream traffic flows $f_i^*(\eta(k))$'s, the multipliers corresponding to the inequality constraints $\boldsymbol{\eta}_i$'s are updated by \eqref{eq_eta_update}.
\begin{subequations}
\begin{align}
\lambda_i(k+1) &= \max\left\{0, \lambda_i(k) + \epsilon \left(f_i^*(\boldsymbol{\eta}(k)) - \sum\limits_{j\in\mathcal{N}_i^+} \underline{r}_{ji}f_j^*(\boldsymbol{\eta}(k)) - \underline{x}_i\right) \right\}\\
\theta_i(k+1) &= \max\left\{0, \theta_i(k) + \epsilon \left(-f_i^*(\boldsymbol{\eta}(k)) + \sum\limits_{j\in\mathcal{N}_i^+} \overline{r}_{ji}f_j^*(\boldsymbol{\eta}(k)) - \overline{x}_i\right) \right\}\\
\alpha_i(k+1) &= \max\left\{0, \alpha_i(k) - \epsilon f_i^*(\boldsymbol{\eta}(k)) \right\}\\
\beta_i(k+1) &= \max\left\{0, \beta_i(k) + \epsilon \left(f_i^*(\boldsymbol{\eta}(k)) - \overline{f}_i\right) \right\}\\
\nu_i(k+1) &= \max\left\{0, \nu_i(k) + \epsilon \left(\sum_{i \in \mathcal{N}_i^+} \overline{r}_{ji}f_j^*(\boldsymbol{\eta}(k)) - \overline{s}_i\right) \right\}\\
\gamma_i(k+1) &= \max\left\{0, \gamma_i(k) + \epsilon \left(\sum_{j \in \mathcal{I}_i}v_jf_j^*(\boldsymbol{\eta}(k)) - T\right) \right\}
\end{align}
\label{eq_eta_update}
\end{subequations}
Note that update rules \eqref{eq_zetaUpdate}, \eqref{eq_flowUpdate}, \eqref{eq_eta_update} requires information of road cell $i$ and its neighboring road cells in $\mathcal{N}_i^+ \cup \mathcal{N}_i^- \cup \mathcal{I}_i$.

Moreover, by applying Algorithm \ref{alg_min_cal}, to each road cell $i$ we are able to estimate coefficient vectors  $\boldsymbol{\Theta}^{(\zeta_i)} = [\Theta_0^{(\zeta_i)}, \Theta_1^{(\zeta_i)}, \cdots, \Theta_{\varsigma_i}^{(\zeta_i)}, 1 ]^T$ and $\boldsymbol{\Theta}^{(f_i)} = [\Theta_0^{(f_i)}, \Theta_1^{(f_i)}, \cdots, \Theta_{\varsigma_i'}^{(f_i)}, 1 ]^T$ such that
\begin{align}
\zeta_i^*(\boldsymbol{\eta}(t)) &= \frac{\sum\limits_{j=0}^{\varsigma_i+1}\Theta_j^{(\zeta_i)}\zeta_i(l+j)}{\sum\limits_{j=0}^{\varsigma_i+1}\Theta_j^{(\zeta_i)}} \label{eq_opt_zeta}\\
f_i^*(\boldsymbol{\eta}(t)) &= \frac{\sum\limits_{j=0}^{\varsigma_i'+1}\Theta_j^{(f_i)}f_i(l+j)}{\sum\limits_{j=0}^{\varsigma_i'+1}\Theta_j^{(f_i)}} \label{eq_opt_f}
\end{align} 
with the data of $\zeta_i(l)$'s obtained in \eqref{eq_zetaUpdate} (resp., $f_i(l)$'s obtained in \eqref{eq_flowUpdate}).
In \cite{ThemistoklisCharalambous2015}, Charalambous et al. show that there exists a set of initial states ($\zeta_i(0)$'s, $f_i(0)$'s) of measure zero for which $\boldsymbol{\Theta}$ is different from $\boldsymbol{\Theta}^{(\zeta_i)}$ (resp., $\boldsymbol{\Theta}^{(f_i)}$).
The main reason for which the algorithm fails is because the Hankel matrix \eqref{eq_Hankel} loses rank for the first time too early.
Although it is hard to characterize such the set of initial states, practical techniques to alleviate the problem are available; see, e.g. Remark 7 and Remark 8 in \cite{ThemistoklisCharalambous2015}.
We here suggest a strategy to deal with this issue.
That is, once a road cell has finished Algorithm \ref{alg_min_cal}, it uses the determined vector $\boldsymbol{\Theta}$ to compute the final state as in \eqref{eq_final_value} and sends this value to its neighbors with a special flag.
By this way, every road cell knows the final values belonging to itself and its neighbors'.
Then, it can check whether its local equation (\eqref{eq_zetaUpdate} or \eqref{eq_flowUpdate}) is satisfied.
If there is any violation determined by a road cell $i$, it sends a special message to notify its neighbors to run Algorithm \ref{alg_min_cal} again with other initial states.

\begin{algorithm}[htb]
\begin{algorithmic}[1]
\BState \emph{Initialize:} Apply Algorithm \ref{alg_min_cal}
\State $\quad$ Determine $\boldsymbol{\Theta}^{(\zeta_i)}, \boldsymbol{\Theta}^{(f_i)}$ and number $D_i, D_i'$.
\State $\quad$ $D_{max} = \max\left\{\left\{D_i\right\}_{i = 1,\dots,N}, \left\{D_i'\right\}_{i = 1,\dots,N}\right\}$
\BState \emph{Iterative update}:
\State $\quad$\textbf{Initialization} $k = 0, \boldsymbol{\eta}_i(0) \gets \textbf{0}$
\State $\quad$\textbf{while} the stop criteria is not satisfied \textbf{do}
\State $\qquad$ Run \eqref{eq_zetaUpdate} and \eqref{eq_flowUpdate} in $D_{max}$ times, then compute
\State $\qquad$ $\qquad$ $\zeta_i^*(\boldsymbol{\eta}(k)) \gets$ \eqref{eq_opt_zeta},
\State $\qquad$ $\qquad$ $x_i^*(\boldsymbol{\eta}(k)) \gets$ \eqref{eq_relationship1},
\State $\qquad$ $\qquad$ $f_i^*(\boldsymbol{\eta}(k)) \gets$ \eqref{eq_opt_f}.
\State $\qquad$ $\boldsymbol{\eta}_i(k+1) \gets $\eqref{eq_eta_update}
\State $\qquad$ $k \gets k+1$
\State $\quad$\textbf{end while}
\BState \emph{Finish algorithm } $f_i^{opt} \gets f_i(k)$
\end{algorithmic}
\caption{Proposed control strategy running in the beginning of cycle.}
\label{alg_proposed_control}
\end{algorithm}
Finally, we summarize our analysis in this section as Algorithm \ref{alg_proposed_control} which is our main result.
Since this algorithm is a distributed version of Algorithm \ref{alg_Centralized}, its correctness is guaranteed by Theorem \ref{th_centralized}.
\begin{Theorem}\label{th_distributed}
By applying Algorithm \ref{alg_proposed_control}, each road cell $i$ is able to determine its downstream traffic flow corresponding to $i$-th element in the optimal solution of problem \eqref{eq_problem1} by using only local information.
\end{Theorem}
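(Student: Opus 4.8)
The plan is to show that Algorithm~\ref{alg_proposed_control} is nothing but a node-wise, finite-time rewriting of the centralized recursion \eqref{eq_centralized_flow}--\eqref{eq_centralized_multipliers}, so that its correctness is inherited from Theorem~\ref{th_centralized}, and then to verify that every quantity each cell touches lives inside its communication set $\mathcal{N}_i^-\cup\mathcal{N}_i^+\cup\mathcal{I}_i$. I would organize the argument in four steps: (i) the inner loop reconstructs $\textbf{f}^*(\boldsymbol{\eta}(k))=\textbf{P}\boldsymbol{\eta}(k)+\textbf{p}$ exactly; (ii) the outer $\boldsymbol{\eta}$-update coincides with \eqref{eq_centralized_multipliers}; (iii) locality of every operation; (iv) the caveat concerning Algorithm~\ref{alg_min_cal}.

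For (i), fix the outer index $k$, so that $h_i(\boldsymbol{\eta}(k))$ and $x_i^*(\boldsymbol{\eta}(k))$ (through \eqref{eq_relationship1}) are constants during the inner iterations. Then \eqref{eq_zetaUpdate} and \eqref{eq_flowUpdate} are exactly the Jacobi iterations for the linear systems \eqref{eq_relationship2} and \eqref{eq_relationship3}; since $\textbf{R}$ is strictly stable by Lemma~\ref{lemma_turning_matrix}, the corresponding difference sequences $\textbf{z}(l)$ decay geometrically, so the final-value analysis of Section~V-A applies verbatim with $\textbf{M}=\textbf{R}$ and $\textbf{M}=\textbf{R}^T$. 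Hence the coefficient vectors $\boldsymbol{\Theta}^{(\zeta_i)},\boldsymbol{\Theta}^{(f_i)}$ returned by Algorithm~\ref{alg_min_cal} in the initialization phase make \eqref{eq_opt_zeta}--\eqref{eq_opt_f} hold, i.e., they recover the exact limits $\zeta_i^*(\boldsymbol{\eta}(k))$ and $f_i^*(\boldsymbol{\eta}(k))$ from a fixed finite window of purely local observations whose length is dominated by $D_{max}$. Running the inner loop $D_{max}$ times therefore supplies enough samples at every node, and after the substitution steps of Algorithm~\ref{alg_proposed_control} each cell holds its entry of $\textbf{x}^*(\boldsymbol{\eta}(k))$ and of $\textbf{f}^*(\boldsymbol{\eta}(k))$, reproducing \eqref{eq_centralized_flow}.

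For (ii), comparing \eqref{eq_eta_update} componentwise with \eqref{eq_Lagrang_derivative2}, the bracketed term attached to each multiplier is precisely the corresponding entry of $\nabla_{\boldsymbol{\eta}}\mathcal{L}$ evaluated at $\textbf{f}=\textbf{f}^*(\boldsymbol{\eta}(k))$, which is the corresponding entry of $\textbf{Q}\textbf{f}^*(\boldsymbol{\eta}(k))+\textbf{q}$; stacking over all $i$ shows that \eqref{eq_eta_update} is exactly \eqref{eq_centralized_multipliers}. Thus the distributed sequence $(\textbf{f}(k),\boldsymbol{\eta}(k))$ equals the centralized one, and Theorem~\ref{th_centralized} gives $f_i(k)\to f_i^{opt}$ for every $i$. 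For (iii), one checks each update: \eqref{eq_zetaUpdate} uses only $\zeta_j$ with $j\in\mathcal{N}_i^-$, \eqref{eq_flowUpdate} only $f_j$ with $j\in\mathcal{N}_i^+$, the term $h_i$ only $\lambda_j,\theta_j,\nu_j$ with $j\in\mathcal{N}_i^-$ and $\gamma_j$ with $j\in\mathcal{I}_i$, and \eqref{eq_eta_update} only $f_j^*(\boldsymbol{\eta}(k))$ with $j\in\mathcal{N}_i^+\cup\mathcal{I}_i$; Algorithm~\ref{alg_min_cal} and the final-value formula \eqref{eq_final_value} run on the scalar sequences $\zeta_i(l),f_i(l)$ alone. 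Everything therefore stays within $\mathcal{N}_i^-\cup\mathcal{N}_i^+\cup\mathcal{I}_i$, which is admissible by Assumption~\ref{as_communicate}.

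The main obstacle is step~(i): exactness of \eqref{eq_opt_zeta}--\eqref{eq_opt_f} relies on Algorithm~\ref{alg_min_cal} detecting the true first rank drop of the Hankel matrix \eqref{eq_Hankel}, which can fail on a measure-zero set of initial states $\zeta_i(0),f_i(0)$. I would dispatch this exactly as described just before the theorem: after finishing Algorithm~\ref{alg_min_cal} each cell broadcasts its computed final value with a flag, every cell then tests whether its local relation \eqref{eq_zetaUpdate} (resp.\ \eqref{eq_flowUpdate}) is met, and on any violation the affected cells rerun Algorithm~\ref{alg_min_cal} from fresh states; since the bad set has measure zero, this terminates almost surely, which restores the identity between the distributed and centralized iterates and hence the conclusion.
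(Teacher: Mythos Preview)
Your proposal is correct and follows exactly the paper's approach: the paper's justification of Theorem~\ref{th_distributed} is simply the sentence ``Since this algorithm is a distributed version of Algorithm~\ref{alg_Centralized}, its correctness is guaranteed by Theorem~\ref{th_centralized},'' relying on the surrounding discussion in Section~V for the details you spell out in steps (i)--(iv). You have merely made explicit what the paper leaves to the preceding analysis (the Jacobi/finite-time equivalence, the componentwise match with \eqref{eq_centralized_multipliers}, the locality check, and the measure-zero caveat for Algorithm~\ref{alg_min_cal}).
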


It is noteworthy that Algorithm \ref{alg_proposed_control} has key properties of distributed strategies.
At the beginning of each cycle, the data of current traffic conditions are used to determine downstream traffic flows.
The computational load of each road cell $i \in \mathcal{R}$ does not depend on the size of the network and it is required to communicate with some special road cells, which have same sink node or source node as $i$.
Moreover, if there is any structural change at one road cell (added or removed), only this road cell and its neighbors need to be reprogrammed. 
In the case one road cell $j$ cannot be used due to accident or other reasons, Algorithm \ref{alg_proposed_control} is still applicable for remaining road cells with the setup of the turning split ratio $r_{ij}(t) = 0$ for all $i \in \mathcal{R}$.
\color{black}
\section{Numerical Simulation}
In this section, some numerical simulations are conducted to validate our proposed algorithms by using MATLAB.

\begin{figure*}[htb]
\begin{center}
\includegraphics[width=0.42\textwidth]{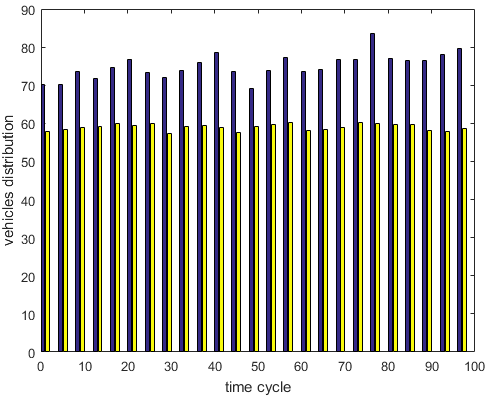}
\includegraphics[width=0.43\textwidth]{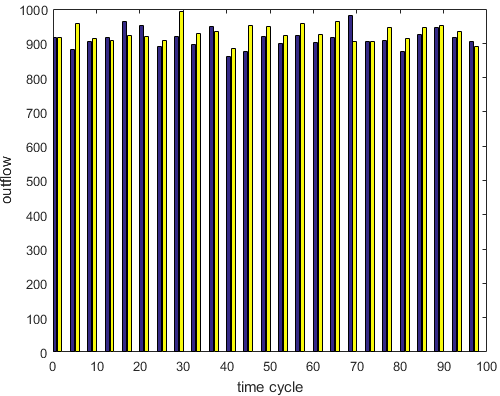}
\caption{Simulation results for normal inflow level case. Yellow bars are cost measurements applied Algorithm \ref{alg_proposed_control} and mazarine bars are corresponding to fixed time strategy.}
\label{fig_normallevel}
\end{center}
\end{figure*}
\begin{figure*}[htb]
\begin{center}
\includegraphics[width=0.42\textwidth]{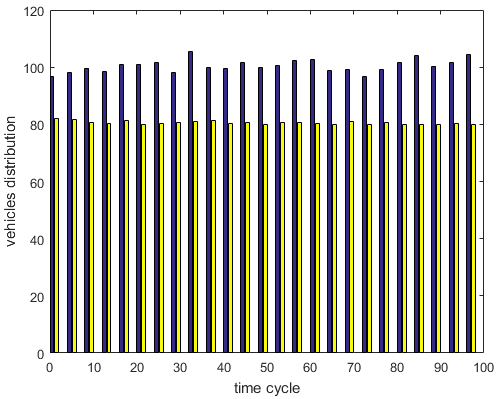}
\includegraphics[width=0.42\textwidth]{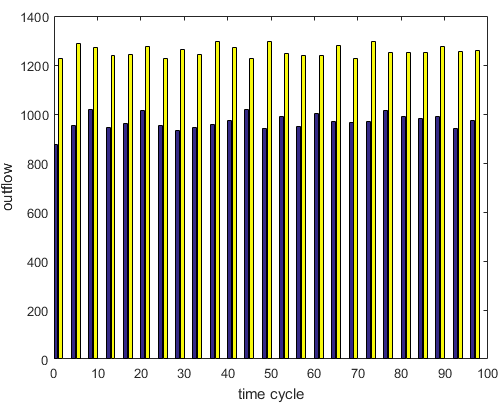}
\caption{Simulation results for high input level case. Yellow bars are cost measurements applied Algorithm \ref{alg_proposed_control} and mazarine bars are corresponding to fixed time strategy.} 
\label{fig_highlevel}
\end{center}
\end{figure*}
To evaluate the effectiveness of Algorithm \ref{alg_proposed_control}, we apply it for the traffic network consisting of $2\times 2$ intersections shown in Fig. \ref{fig_4crossroad_graph}.
The obtained results are compared with traffic behaviors under fixed time control strategy, where green duration times of intersections are constant in all cycles.
The objective function of road cell $i \in \mathcal{R}$ is considered as
\[\Phi_i(\rho_i(k), \psi_i(k)) = a_i\sum_{i\in\mathcal{R}}a_i(\rho_i(k))^2 + w_i\psi_i(k)\] 
where $a_i = 0.55$ if $\sigma(i) = O$ and $a_i = 0.5$ otherwise; $w_i = -20$ if $\tau(i) = O$ and $w_i = -10$ otherwise (negative weight for the purpose of maximizing).
It is a weighted sum of the vehicles volume squared and the downstream traffic flow for each road cell.
Assume that traffic congestion volumes $\rho_i^{cg} = 300$ for all $i$ and the exogenous inflow entering into road cell $i$, where $\sigma(i) = O$, is 
\[\mu_i(t) = Q_{in}[1+0.1rand()]\]
where $Q_{in}$ corresponds to inflow level.

The traffic behaviors we choose to compare are the averaged cost for vehicle distributions, which is $\sum_{i\in\mathcal{R}}a_i(\rho_i(k))^2/ \sum_{i\in\mathcal{R}}\rho_i(k)$, and the sum of the downstream traffic flow of destination road cells $i$'s where $\tau(i) = O$, i.e., the total outflows or the vehicles throughput of traffic network.
In the case of normal level $Q_{in} = \rho_i^{cg}/3$, simulation results are given in Fig. \ref{fig_normallevel}.
The left figure represents the averaged cost for vehicle distributions of some time cycles from $1$ to $100$, and the right one corresponds to the total outflows.
We see that although the vehicles throughput of traffic network in two strategies are almost similar, Algorithm \ref{alg_proposed_control} is shown to improve vehicle distribution condition.
This reduces the risk of congestion and increase the smooth operation of overall traffic network.
Fig. \ref{fig_highlevel} are simulation result for high level case where $Q_{in} = \rho_i^{cg}/2$. 
By applying Algorithm \ref{alg_proposed_control}, traffic conditions are shown to be improved significantly in both criteria for vehicle distribution and total vehicle throughput of the traffic network.

\begin{figure}[htb]
\begin{center}
\includegraphics[width=0.35\textwidth]{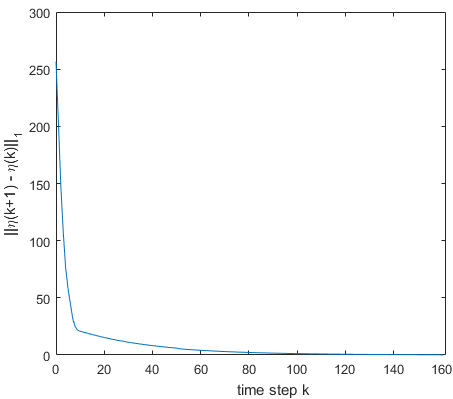}
\caption{The evolution of Lagrangian multiplier $\boldsymbol{\eta}$.} 
\label{fig_eta}
\end{center}
\end{figure}
Consider the operation of while loop in Algorithm \ref{alg_proposed_control}, the main problem is to update iteratively multiplier $\boldsymbol{\eta}$ of inequality constraints in \eqref{eq_problem1}.
We illustrate the evolution of $\boldsymbol{\eta}$ in $60$-th cycle as in Fig.\ref{fig_eta}.
The difference of $\boldsymbol{\eta}(k+1)-\boldsymbol{\eta}(k)$ is very tiny after about $100$ steps or $\boldsymbol{\eta}^{opt}$ and the optimal solution of \eqref{eq_problem1} is estimated with sufficient accuracy by running about $100$ while loops.

We determine the maximum number $D_{max}$ of updates \eqref{eq_flowUpdate} (or \eqref{eq_zetaUpdate}) required by one road cell applying Algorithm \ref{alg_min_cal} for the traffic network of $m \times n$ intersections (shown in Fig. \ref{fig_mncrossroad_graph}).
To check the advantage of using Algorithm \ref{alg_min_cal}, we compare these upper bounds with the necessary number of update \eqref{eq_flowUpdate}(or \eqref{eq_zetaUpdate}) to have sufficient closed solution of \eqref{eq_relationship2} (or \eqref{eq_relationship3}, resp.).
TABLE \ref{tbl} shows comparison result for some $m$'s and $n$'s.
By applying Algorithm \ref{alg_min_cal}, the running time of each while loop (in Algorithm \ref{alg_proposed_control}) is reduced significantly as the size of traffic networks increases.
\begin{figure}[htb]
\begin{center}
\centering
\scalebox{0.7}{\begin{tikzpicture}[
textnode/.style={rectangle},
squarenode/.style={rectangle, draw=black, fill=white,  very thick, minimum size=10mm},
ellipsenodeIn1/.style={ellipse, draw=black, fill=black, minimum height=4mm, minimum width= 2mm},
ellipsenodeIn2/.style={ellipse, draw=black, fill=black, minimum width=4mm, minimum height= 2mm},
ellipsenodeOut1/.style={ellipse, draw=black, fill=white, minimum height=4mm, minimum width= 2mm},
ellipsenodeOut2/.style={ellipse, draw=black, fill=white, minimum width=4mm, minimum height= 2mm},
roundnode/.style={circle, draw=black, fill=black!20,  very thick, minimum size=3mm},
]
\node at (1,5) [squarenode] (n11) {$I_{1,1}$};
\node at (3,5) [squarenode] (n12) {$I_{1,2}$};
\node at (5,5) [squarenode] (n1n1) {$I_{1,n-1}$};
\node at (7,5) [squarenode] (n1n) {$I_{1,n}$};
\node at (1,3) [squarenode] (n21) {$I_{2,1}$};
\node at (3,3) [squarenode] (n22) {$I_{2,2}$};
\node at (5,3) [squarenode] (n2n1) {$I_{2,n-1}$};
\node at (7,3) [squarenode] (n2n) {$I_{2,n}$};
\node at (1,1) [squarenode] (nm1) {$I_{m,1}$};
\node at (3,1) [squarenode] (nm2) {$I_{m,2}$};
\node at (5,1) [squarenode] (nmn1) {$I_{m,n-1}$};
\node at (7,1) [squarenode] (nmn) {$I_{m,n}$};
\node at (0.7,6.5) [ellipsenodeIn2] (no1) {};
\node at (1.3,6.5) [ellipsenodeOut2] (no2) {};
\node at (2.7,6.5) [ellipsenodeIn2] (no3) {};
\node at (3.3,6.5) [ellipsenodeOut2] (no4) {};
\node at (4.7,6.5) [ellipsenodeIn2] (no5) {};
\node at (5.3,6.5) [ellipsenodeOut2] (no6) {};
\node at (6.7,6.5) [ellipsenodeIn2] (no7) {};
\node at (7.3,6.5) [ellipsenodeOut2] (no8) {};
\node at (-0.5,5.3) [ellipsenodeOut1] (no9) {};
\node at (8.5,5.3) [ellipsenodeIn1] (no10) {};
\node at (-0.5,4.7) [ellipsenodeIn1] (no11) {};
\node at (8.5,4.7) [ellipsenodeOut1] (no12) {};
\node at (-0.5,3.3) [ellipsenodeOut1] (no13) {};
\node at (8.5,3.3) [ellipsenodeIn1] (no14) {};
\node at (-0.5,2.7) [ellipsenodeIn1] (no15) {};
\node at (8.5,2.7) [ellipsenodeOut1] (no16) {};
\node at (-0.5,1.3) [ellipsenodeOut1] (no17) {};
\node at (8.5,1.3) [ellipsenodeIn1] (no18) {};
\node at (-0.5,0.7) [ellipsenodeIn1] (no19) {};
\node at (8.5,0.7) [ellipsenodeOut1] (no20) {};
\node at (0.7,-0.5) [ellipsenodeOut2] (no21) {};
\node at (1.3,-0.5) [ellipsenodeIn2] (no22) {};
\node at (2.7,-0.5) [ellipsenodeOut2] (no23) {};
\node at (3.3,-0.5) [ellipsenodeIn2] (no24) {};
\node at (4.7,-0.5) [ellipsenodeOut2] (no25) {};
\node at (5.3,-0.5) [ellipsenodeIn2] (no26) {};
\node at (6.7,-0.5) [ellipsenodeOut2] (no27) {};
\node at (7.3,-0.5) [ellipsenodeIn2] (no28) {};
\draw[->,{line width=3pt},black!40] (no1.south)--(no1.south|-n11.north);
\draw[<-,{line width=3pt},black!40] (no2.south)--(no2.south|-n11.north);
\draw[->,{line width=3pt},black!40] (no3.south)--(no3.south|-n12.north);
\draw[<-,{line width=3pt},black!40] (no4.south)--(no4.south|-n12.north);
\draw[->,{line width=3pt},black!40] (no5.south)--(no5.south|-n1n1.north);
\draw[<-,{line width=3pt},black!40] (no6.south)--(no6.south|-n1n1.north);
\draw[->,{line width=3pt},black!40] (no7.south)--(no7.south|-n1n.north);
\draw[<-,{line width=3pt},black!40] (no8.south)--(no8.south|-n1n.north);

\draw[<-,{line width=3pt},black!40] (no9.east)--(no9.east-|n11.west);
\draw[<-,{line width=3pt},black!40, transform canvas={yshift=3mm}] (n11.east)->(n12.west);
\draw[<-,{line width=3pt},black!40, dotted, transform canvas={yshift=3mm}] (n12.east)->(n1n1.west);
\draw[<-,{line width=3pt},black!40, transform canvas={yshift=3mm}] (n1n1.east)->(n1n.west);
\draw[<-,{line width=3pt},black!40, transform canvas={yshift=3mm}] (n1n.east)->(n1n.east-|no10.west);

\draw[->,{line width=3pt},black!40] (no11.east)--(no11.east-|n11.west);
\draw[->,{line width=3pt},black!40, transform canvas={yshift=-3mm}] (n11.east)->(n12.west);
\draw[->,{line width=3pt},black!40, dotted, transform canvas={yshift=-3mm}] (n12.east)->(n1n1.west);
\draw[->,{line width=3pt},black!40, transform canvas={yshift=-3mm}] (n1n1.east)->(n1n.west);
\draw[->,{line width=3pt},black!40, transform canvas={yshift=-3mm}] (n1n.east)--(n1n.east-|no12.west);

\draw[->,{line width=3pt},black!40, transform canvas={xshift=-3mm}] (n11.south)--(n11.south|-n21.north);
\draw[<-,{line width=3pt},black!40, transform canvas={xshift=3mm}] (n11.south)--(n11.south|-n21.north);
\draw[->,{line width=3pt},black!40, transform canvas={xshift=-3mm}] (n12.south)--(n12.south|-n22.north);
\draw[<-,{line width=3pt},black!40, transform canvas={xshift=3mm}] (n12.south)--(n12.south|-n22.north);
\draw[->,{line width=3pt},black!40, transform canvas={xshift=-3mm}] (n1n1.south)--(n1n1.south|-n2n1.north);
\draw[<-,{line width=3pt},black!40, transform canvas={xshift=3mm}] (n1n1.south)--(n1n1.south|-n2n1.north);
\draw[->,{line width=3pt},black!40, transform canvas={xshift=-3mm}] (n1n.south)--(n1n.south|-n2n.north);
\draw[<-,{line width=3pt},black!40, transform canvas={xshift=3mm}] (n1n.south)--(n1n.south|-n2n.north);

\draw[<-,{line width=3pt},black!40] (no13.east)--(no13.east-|n11.west);
\draw[<-,{line width=3pt},black!40, transform canvas={yshift=3mm}] (n21.east)->(n22.west);
\draw[<-,{line width=3pt},black!40, dotted, transform canvas={yshift=3mm}] (n22.east)->(n2n1.west);
\draw[<-,{line width=3pt},black!40, transform canvas={yshift=3mm}] (n2n1.east)->(n2n.west);
\draw[<-,{line width=3pt},black!40, transform canvas={yshift=3mm}] (n2n.east)--(n2n.east-|no14.west);

\draw[->,{line width=3pt},black!40] (no15.east)--(no15.east-|n21.west);
\draw[->,{line width=3pt},black!40, transform canvas={yshift=-3mm}] (n21.east)->(n22.west);
\draw[->,{line width=3pt},black!40, dotted, transform canvas={yshift=-3mm}] (n22.east)->(n2n1.west);
\draw[->,{line width=3pt},black!40, transform canvas={yshift=-3mm}] (n2n1.east)->(n2n.west);
\draw[->,{line width=3pt},black!40, transform canvas={yshift=-3mm}] (n2n.east)--(n2n.east-|no16.west);

\draw[->,{line width=3pt},black!40,dotted, transform canvas={xshift=-3mm}] (n21.south)--(n21.south|-nm1.north);
\draw[<-,{line width=3pt},black!40,dotted, transform canvas={xshift=3mm}] (n21.south)--(n21.south|-nm1.north);
\draw[->,{line width=3pt},black!40,dotted, transform canvas={xshift=-3mm}] (n22.south)--(n22.south|-nm2.north);
\draw[<-,{line width=3pt},black!40,dotted, transform canvas={xshift=3mm}] (n22.south)--(n22.south|-nm2.north);
\draw[->,{line width=3pt},black!40,dotted, transform canvas={xshift=-3mm}] (n2n1.south)--(n2n1.south|-nmn1.north);
\draw[<-,{line width=3pt},black!40,dotted, transform canvas={xshift=3mm}] (n2n1.south)--(n2n1.south|-nmn1.north);
\draw[->,{line width=3pt},black!40,dotted, transform canvas={xshift=-3mm}] (n2n.south)--(n2n.south|-nmn.north);
\draw[<-,{line width=3pt},black!40,dotted, transform canvas={xshift=3mm}] (n2n.south)--(n2n.south|-nmn.north);

\draw[<-,{line width=3pt},black!40] (no17.east)--(no17.east-|nm1.west);
\draw[<-,{line width=3pt},black!40, transform canvas={yshift=3mm}] (nm1.east)->(nm2.west);
\draw[<-,{line width=3pt},black!40, dotted, transform canvas={yshift=3mm}] (nm2.east)->(nmn1.west);
\draw[<-,{line width=3pt},black!40, transform canvas={yshift=3mm}] (nmn1.east)->(nmn.west);
\draw[<-,{line width=3pt},black!40, transform canvas={yshift=3mm}] (nmn.east)--(nmn.east-|no18.west);

\draw[->,{line width=3pt},black!40] (no19.east)--(no19.east-|nm1.west);
\draw[->,{line width=3pt},black!40, transform canvas={yshift=-3mm}] (nm1.east)->(nm2.west);
\draw[->,{line width=3pt},black!40, dotted, transform canvas={yshift=-3mm}] (nm2.east)->(nmn1.west);
\draw[->,{line width=3pt},black!40, transform canvas={yshift=-3mm}] (nmn1.east)->(nmn.west);
\draw[->,{line width=3pt},black!40, transform canvas={yshift=-3mm}] (nmn.east)--(nmn.east-|no20.west);

\draw[->,{line width=3pt},black!40, transform canvas={xshift=-3mm}] (nm1.south)--(nm1.south|-no21.north);
\draw[<-,{line width=3pt},black!40, transform canvas={xshift=3mm}] (nm1.south)--(nm1.south|-no22.north);
\draw[->,{line width=3pt},black!40, transform canvas={xshift=-3mm}] (nm2.south)--(nm2.south|-no23.north);
\draw[<-,{line width=3pt},black!40, transform canvas={xshift=3mm}] (nm2.south)--(nm2.south|-no24.north);
\draw[->,{line width=3pt},black!40, transform canvas={xshift=-3mm}] (nmn1.south)--(nmn1.south|-no25.north);
\draw[<-,{line width=3pt},black!40, transform canvas={xshift=3mm}] (nmn1.south)--(nmn1.south|-no26.north);
\draw[->,{line width=3pt},black!40, transform canvas={xshift=-3mm}] (nmn.south)--(nmn.south|-no27.north);
\draw[<-,{line width=3pt},black!40, transform canvas={xshift=3mm}] (nmn.south)--(nmn.south|-no28.north);
\end{tikzpicture}}
\end{center}
\caption{Graph representation for traffic network of $m\times n$ intersections.}
\label{fig_mncrossroad_graph}
\end{figure}
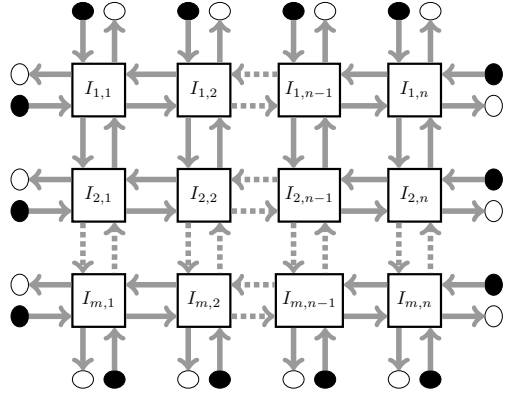
\begin{table}[htb]
\centering
\begin{tabular}{c|c|c|c|c|c|c|c}
\multicolumn{8}{c}{$m = 2$}\\
 \hline
$n = $  & $2$ & $5$ & $10$ & $20$ & $30$ & $45$ & $60$\\
 \hline
 \eqref{eq_zetaUpdate} or \eqref{eq_flowUpdate} & $15$ & $26$ & $35$ & $40$ & $42$ & $42$ & $42$ \\
 \hline
 Algorithm \ref{alg_min_cal} & $14$ & $24$ & $28$ & $30$ & $30$ & $32$ & $32$ \\
 \hline
\multicolumn{8}{c}{$m = 5$}\\
 \hline
$n = $  & $2$ & $5$ & $10$ & $20$ & $30$ & $45$ & $60$\\
 \hline
 \eqref{eq_zetaUpdate} or \eqref{eq_flowUpdate} & $26$ & $58$ & $84$ & $90$ & $98$ & $112$ & $128$ \\
 \hline
 Algorithm \ref{alg_min_cal} & $24$ & $30$ & $34$ & $34$ & $36$ & $38$ & $40$ \\
 \hline
\multicolumn{8}{c}{$m = 10$}\\
 \hline
$n = $  & $2$ & $5$ & $10$ & $20$ & $30$ & $45$ & $60$\\
 \hline
 \eqref{eq_zetaUpdate} or \eqref{eq_flowUpdate} & $35$ & $84$ & $125$ & $145$ & $152$ & $159$ & $157$ \\
 \hline
 Algorithm \ref{alg_min_cal} & $28$ & $34$ & $44$ & $46$ & $48$ & $52$ & $56$ \\
 \hline
\multicolumn{8}{c}{$m = 20$}\\
 \hline
$n = $  & $2$ & $5$ & $10$ & $20$ & $30$ & $45$ & $60$\\
 \hline
 \eqref{eq_zetaUpdate} or \eqref{eq_flowUpdate} & $40$ & $90$ & $145$ & $192$ & $268$ & $368$ & $456$ \\
 \hline
 Algorithm \ref{alg_min_cal} & $30$ & $34$ & $46$ & $56$ & $78$ & $98$ & $118$ \\
 \hline
\end{tabular}
\caption{Comparison of observation times used with and without Algorithm \ref{alg_min_cal}.}
\label{tbl}
\end{table}
\section{Conclusion}
The traffic management is always a crucial problem in human life and affects (directly or indirectly) to economics, society, politics and environment.
To cope with large-scale traffic network, this paper proposed a distributed control strategy to coordinate the traffic flows.
We used discrete-time Cell Transmission Model (CTM) to describe the dynamic nature of road cells and formulated the control objective as a constrained minimization problem in a multiagent perspective.
Under our proposed algorithms, every road cell uses only local information to determine its control decision (downstream traffic flow) to improve the traffic conditions in the overall network. 
We showed that the obtained downstream traffic flows are feasible in the sense that they satisfy all physical constraints.
\bibliographystyle{IEEEtran}
\bibliography{mylib}

\begin{thebibliography}{10}
\providecommand{\url}[1]{#1}
\csname url@samestyle\endcsname
\providecommand{\newblock}{\relax}
\providecommand{\bibinfo}[2]{#2}
\providecommand{\BIBentrySTDinterwordspacing}{\spaceskip=0pt\relax}
\providecommand{\BIBentryALTinterwordstretchfactor}{4}
\providecommand{\BIBentryALTinterwordspacing}{\spaceskip=\fontdimen2\font plus
\BIBentryALTinterwordstretchfactor\fontdimen3\font minus
  \fontdimen4\font\relax}
\providecommand{\BIBforeignlanguage}[2]{{%
\expandafter\ifx\csname l@#1\endcsname\relax
\typeout{** WARNING: IEEEtran.bst: No hyphenation pattern has been}%
\typeout{** loaded for the language `#1'. Using the pattern for}%
\typeout{** the default language instead.}%
\else
\language=\csname l@#1\endcsname
\fi
#2}}
\providecommand{\BIBdecl}{\relax}
\BIBdecl

\bibitem{GlenWeisbrod2001}
G.~Treyz, G.~E. Weisbrod, and D.~Vary, \emph{Economic Implications of
  Congestion}.\hskip 1em plus 0.5em minus 0.4em\relax National Cooperative
  Highway Research Program and American Association of State Highway and
  Transportation Officials and National Research Council (U.S.). Transportation
  Research Board, 2001.

\bibitem{KaiZhang2013}
K.~Zhanga and S.~Batterman, ``Air pollution and health risks due to vehicle
  traffic,'' \emph{Science of The Total Environment}, vol. 450-451, p.
  307–316, 2013.

\bibitem{MarkosPapageorgiou2003}
M.~Papageorgiou, C.~Diakaki, V.~Dinopoulou, A.~Kotsialos, and Y.~Wang, ``Review
  of road traffic control strategies,'' \emph{Proceedings of the IEEE},
  vol.~91, no.~12, pp. 4416--4426, Dec 2003.

\bibitem{LeiChen2016}
L.~Chen and C.~Englund, ``Cooperative intersection management: A survey,''
  \emph{IEEE Transactions on Intelligent Transportation Systems}, vol.~17,
  no.~2, pp. 570 -- 586, Feb 2016.

\bibitem{KonstantinosAmpountolas2009}
K.~Aboudolas, M.~Papageorgiou, and E.~Kosmatopoulos, ``Store-and-forward based
  methods for the signal control problem in large-scale congested urban road
  networks,'' \emph{Transportation Research Part C: Emerging Technologies},
  vol.~18, no.~5, p. 680–694, 2010.

\bibitem{KonstantinosAmpountolas2010}
------, ``A rolling-horizon quadratic programming approach to the signal
  control problem in large scale congested urban road networks,''
  \emph{Transportation Research Part C: Emerging Technologies}, vol.~17, no.~2,
  p. 163–174, 2009.

\bibitem{JackHaddad2010}
J.~Haddad, B.~D. Schutter, D.~Mahalel, I.~Ioslovich, and P.~Gutman, ``Optimal
  steady-state control for isolated traffic intersections,'' \emph{IEEE
  Transactions on Automatic Control}, vol.~55, no.~11, pp. 2612 -- 2617, Nov
  2010.

\bibitem{SamuelCoogan2017}
S.~Coogan, E.~Kim, G.~Gomes, M.~Arcak, and P.~Varaiya, ``Offset optimization in
  signalized traffic networks via semidefinite relaxation,''
  \emph{Transportation Research Part B: Methodological}, vol. 100, p. 82–92,
  2017.

\bibitem{EricSKim2017}
E.~S. Kim, C.-J. Wu, R.~Horowitz, and M.~Arcak, ``Offset optimization of
  signalized intersections via the burer-monteiro method,'' in \emph{Proc. of
  the 2017 American Control Conference (ACC)}, 2017, p. 3554–3559.

\bibitem{SteliosTimotheou2015}
S.~Timotheou, C.~G. Panayiotou, and M.~M. Polycarpou, ``Distributed traffic
  signal control using the cell transmission model via the alternating
  direction method of multipliers,'' \emph{IEEE Transactions on Intelligent
  Transportation Systems}, vol.~16, no.~2, pp. 919 -- 933, Apr 2015.

\bibitem{PietroGrandinetti2018}
P.~Grandinetti, C.~C. de~Wit, and F.~Garin, ``Distributed optimal traffic
  lights design for large-scale urban networks,'' \emph{IEEE Transactions on
  Control Systems Technology}, pp. 1 -- 14, 2018.

\bibitem{CarlosFDaganzo1994}
C.~F. Daganzo, ``The cell transmission model: A dynamic representation of
  highway traffic consistent with the hydrodynamic theory,''
  \emph{Transportation Research Part B: Methodological}, vol.~28, no.~4, p.
  269–287, 1994.

\bibitem{CarlosFDaganzo1995}
------, ``The cell transmission model, part ii: network traffic,''
  \emph{Transportation Research Part B: Methodological}, vol.~29, no.~2, p.
  79–93, 1995.

\bibitem{VietHoangPham2019}
V.~H. Pham, K.~Sakurama, and H.-S. Ahn, ``A decentralized control strategy for
  urban traffic network,'' in \emph{Proc. of the 12th Asian Control Conference
  (ASCC)}, 2019.

\bibitem{ShreyasSundaram2007}
S.~Sundaram and C.~N. Hadjicostis, ``Finite-time distributed consensus in
  graphs with time-invariant topologies,'' in \emph{Proc. of the 2007 American
  Control Conference (ACC)}, 2007.

\bibitem{YeYuan2009}
Y.~Yuan, G.-B. Stan, L.~Shi, and J.~Goncalves1, ``Decentralised final value
  theorem for discrete-time lti systems with application to minimal-time
  distributed consensus,'' in \emph{Proc. of the 48th IEEE Conference on
  Decision and Control and 28th Chinese Control Conference}, 2009.

\bibitem{YeYuan2013}
Y.~Yuan, G.-B. Stan, L.~Shi, M.~Barahona, and J.~Goncalves, ``Decentralised
  minimum-time consensus,'' \emph{Automatica}, vol.~49, no.~11, pp. 1227--1235,
  2013.

\bibitem{ChangJenLan1999}
C.~J. Lan and G.~A. Davis, ``Real-time estimation of turning movement
  proportions frompartial counts on urban networks,'' \emph{Transportation
  Research Part C}, vol.~7, pp. 305 -- 327, 1999.

\bibitem{MartinRodriguezVega2019}
M.~Rodriguez-Vega, C.~C. de~Wit, and H.~Fourati, ``Location of turning ratio
  and flow sensors for flow reconstruction in large traffic networks,''
  \emph{Transportation Research Part B}, vol. 121, p. 21–40, 2019.

\bibitem{AshishBhaskar2015}
A.~Bhaskar, M.~Qu, and E.~Chung, ``Bluetooth vehicle trajectory by fusing
  bluetooth and loops: Motorway travel time statistics,'' \emph{Transportation
  Research Part B}, vol. 121, p. 21–40, 2019.

\bibitem{AfshinAbadi2015}
A.~Abadi, T.~Rajabioun, and P.~A. Ioannou, ``Traffic flow prediction for road
  transportation networks with limited traffic data,'' \emph{IEEE Transactions
  on Intelligent Transportation Systems}, vol.~16, no.~2, pp. 653 -- 662, April
  2015.

\bibitem{SeunghyeonLee2015}
S.~Lee, S.~C. Wong, C.~C.~C. Pang, and K.~Choi, ``Real-time estimation of
  lane-to-lane turning flows at isolated signalized junctions,'' \emph{IEEE
  Transactions on Intelligent Transportation Systems}, vol.~16, no.~3, pp. 1549
  -- 1558, June 2015.

\bibitem{ThemistoklisCharalambous2015}
T.~Charalambous, Y.~Yuan, T.~Yang, W.~Pan, C.~N. Hadjicostis, and M.~Johansson,
  ``Distributed finite-time average consensus in digraphs in the presence of
  time delays,'' \emph{IEEE Transaction on Control of Network Systems}, vol.~2,
  no.~4, pp. 370--381, Dec 2015.

\bibitem{DimitriPBertsekas1999}
D.~P. Bertsekas, \emph{Nonlinear Programming}, 2nd~ed.\hskip 1em plus 0.5em
  minus 0.4em\relax Academic Press, 1999.

\end{thebibliography}
\appendix
In this part, we provide some mathematical background for our analysis.
\subsection{Convex optimization}
Consider the nonlinear constrained problem
\begin{equation}\label{eq_ap_problem1}
\begin{split}
\min\limits_{\textbf{x}\in\mathcal{X}}&\textrm{ } \phi(\textbf{x})\\
\textrm{s.t. }& g_i(\textbf{x}) \le 0, i = 1,\dots,p
\end{split}
\end{equation}
under convexity and interior point assumptions as follows.
\begin{Assumption}[Assumption 5.3.2 \cite{DimitriPBertsekas1999}]\label{as_ConvexAndInterior}
The set $\mathcal{X}$ is a convex subset of $\mathbb{R}^n$ and the function $\phi:\mathbb{R}^n \rightarrow \mathbb{R}$, $g_j:\mathbb{R}^n \rightarrow \mathbb{R}$ are convex over $\mathcal{X}$.
In addition, there exists a vector $\bar{\textbf{x}}\in\mathcal{X}$ such that $g_i(\bar{\textbf{x}}) < 0, i = 1,\dots,p$.
\end{Assumption}
We have the Lagrangian function as
\[\mathcal{L}(\textbf{x},\boldsymbol{\mu}) = \phi(\textbf{x}) + \boldsymbol{\mu}^T\textbf{g}(\textbf{x})\]
Define the dual function by $\varphi(\boldsymbol{\mu}) = \inf_{\textbf{x}\in\mathcal{X}}\mathcal{L}(\textbf{x},\boldsymbol{\mu})$, then the associated dual problem of \eqref{eq_ap_problem1} is given as follows.
\begin{equation}\label{eq_ap_dualproblem}
\max\limits_{\boldsymbol{\mu} \ge \textbf{0}}\varphi(\boldsymbol{\mu})
\end{equation}
Let $\phi^*$ and $\varphi^*$ be the optimal value of the primal problem \eqref{eq_ap_problem1} and the dual problem \eqref{eq_ap_dualproblem}, respectively.
From the definition, we have $\varphi^* \le \phi^*$ in general.
Moreover, if the primal problem \eqref{eq_ap_problem1} satisfies conditions stated in Assumption \ref{as_ConvexAndInterior}, the existence of Lagrange multiplier is guaranteed.
\begin{Proposition}[Proposition 5.3.2 \cite{DimitriPBertsekas1999}]\label{prob_strongdual}
Let Assumption \ref{as_ConvexAndInterior} hold for the problem \eqref{eq_ap_problem1}.
Then $\phi^* = \varphi^*$ and there exists at least one Lagrange multiplier $\boldsymbol{\mu}^*$ where 
\[\boldsymbol{\mu}^* \ge \textbf{0} \textrm{ and } \phi^* = \inf\limits_{\textbf{x}\in\mathcal{X}}\mathcal{L}(\textbf{x},\boldsymbol{\mu}^*)\]
\end{Proposition}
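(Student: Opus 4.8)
The plan is to establish strong duality by a separating (supporting) hyperplane argument applied to the set of achievable pairs of constraint values and objective values. Concretely, I would introduce
\[\mathcal{A} = \left\{ (z,w) \in \mathbb{R}^p \times \mathbb{R} : \exists\, \textbf{x} \in \mathcal{X} \text{ with } g_i(\textbf{x}) \le z_i \ \forall i,\ \phi(\textbf{x}) \le w \right\}.\]
The first step is to verify that $\mathcal{A}$ is convex; this is immediate from convexity of $\mathcal{X}$ and of $\phi$ and the $g_i$ over $\mathcal{X}$, since a convex combination of two witnesses $\textbf{x}_1,\textbf{x}_2$ is a witness for the corresponding convex combination of pairs. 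Assuming $\phi^*$ is finite (the case $\phi^* = -\infty$ being disposed of directly by weak duality), one checks that $(\textbf{0},\phi^*)$ lies on the boundary of $\mathcal{A}$: it is the limit of the points $(\textbf{0},\phi^*+1/k)\in\mathcal{A}$, while no point $(\textbf{0},w)$ with $w<\phi^*$ lies in $\mathcal{A}$ by definition of $\phi^*$.

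Next I would apply the supporting hyperplane theorem at $(\textbf{0},\phi^*)$: there is a nonzero vector $(\boldsymbol{\mu},\mu_0)\in\mathbb{R}^p\times\mathbb{R}$ with $\mu_0 w + \boldsymbol{\mu}^T z \ge \mu_0\phi^*$ for all $(z,w)\in\mathcal{A}$. Because $\mathcal{A}$ is unbounded upward in each coordinate (raising any $z_i$ or $w$ keeps a pair in $\mathcal{A}$), letting components tend to $+\infty$ forces $\boldsymbol{\mu}\ge\textbf{0}$ and $\mu_0\ge 0$.

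The crucial step, and the main obstacle, is ruling out $\mu_0 = 0$, which is exactly where the interior-point (Slater) hypothesis of Assumption~\ref{as_ConvexAndInterior} is used. If $\mu_0 = 0$, the inequality reads $\boldsymbol{\mu}^T z \ge 0$ for all $(z,w)\in\mathcal{A}$; taking the Slater point $\bar{\textbf{x}}$, the pair $(\textbf{g}(\bar{\textbf{x}}),\phi(\bar{\textbf{x}}))$ lies in $\mathcal{A}$, so $\boldsymbol{\mu}^T\textbf{g}(\bar{\textbf{x}}) \ge 0$; but $\boldsymbol{\mu}\ge\textbf{0}$, $\boldsymbol{\mu}\neq\textbf{0}$, and $g_i(\bar{\textbf{x}})<0$ for all $i$ give $\boldsymbol{\mu}^T\textbf{g}(\bar{\textbf{x}})<0$, a contradiction. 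Hence $\mu_0 > 0$, and after normalizing $\mu_0 = 1$ we may take $z = \textbf{g}(\textbf{x})$, $w = \phi(\textbf{x})$ for any $\textbf{x}\in\mathcal{X}$ to obtain $\mathcal{L}(\textbf{x},\boldsymbol{\mu}) = \phi(\textbf{x}) + \boldsymbol{\mu}^T\textbf{g}(\textbf{x}) \ge \phi^*$. Taking the infimum over $\textbf{x}\in\mathcal{X}$ yields $\varphi(\boldsymbol{\mu})\ge\phi^*$, while weak duality gives $\varphi^*\le\phi^*$; therefore $\phi^* = \varphi^* = \varphi(\boldsymbol{\mu})$, and $\boldsymbol{\mu}^* = \boldsymbol{\mu}$ is the desired multiplier, with $\boldsymbol{\mu}^*\ge\textbf{0}$ and $\phi^* = \inf_{\textbf{x}\in\mathcal{X}}\mathcal{L}(\textbf{x},\boldsymbol{\mu}^*)$. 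Since this is a classical result, an alternative to carrying out the argument in full is simply to cite \cite{DimitriPBertsekas1999}, as the statement already does.
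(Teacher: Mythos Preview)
Your argument is correct and is precisely the standard supporting-hyperplane proof of strong duality under the Slater condition; there is no gap. However, the paper does not give its own proof of this proposition at all: it is stated in the appendix purely as a quoted result from \cite{DimitriPBertsekas1999}, with no accompanying argument. So the only ``approach'' in the paper is the citation you mention in your final sentence, and your sketch simply spells out what that citation contains.
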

When the Lagrange multiplier $\boldsymbol{\mu}^*$ is determined, the optimal solution of \eqref{eq_ap_problem1} is verified by the following proposition.
\begin{Proposition}[Proposition 5.1.11 \cite{DimitriPBertsekas1999}]\label{prob_optimalsolution}
Let $\boldsymbol{\mu}^*$ be a Lagrange multiplier having the properties given in Proposition \ref{prob_strongdual}. 
Then $\textbf{x}^*$ is a global minimum of the primal problem if and only if $\textbf{x}^*$ is feasible and 
\[\textbf{x}^* = \arg\min\limits_{\textbf{x}\in\mathcal{X}}\mathcal{L}(\textbf{x}, \boldsymbol{\mu}^*)\]
\[\mu_j^*g_j(\textbf{x}^*) = 0, j = 1,\dots,p\]
\end{Proposition}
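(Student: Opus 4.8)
The plan is to prove both implications of the equivalence by a short sandwich argument built on the distinguishing property of $\boldsymbol{\mu}^*$ supplied by Proposition~\ref{prob_strongdual}, namely the \emph{value} equality $\phi^* = \inf_{\textbf{x}\in\mathcal{X}}\mathcal{L}(\textbf{x},\boldsymbol{\mu}^*)$ together with $\boldsymbol{\mu}^*\ge\textbf{0}$. Throughout I would write $\mathcal{L}(\textbf{x},\boldsymbol{\mu}^*) = \phi(\textbf{x}) + \sum_{j=1}^p \mu_j^* g_j(\textbf{x})$ and keep the two numbered conditions (Lagrangian minimization and complementary slackness) in view.

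For the \emph{if} direction, I would assume $\textbf{x}^*$ is feasible, minimizes $\mathcal{L}(\cdot,\boldsymbol{\mu}^*)$ over $\mathcal{X}$, and satisfies $\mu_j^* g_j(\textbf{x}^*)=0$ for all $j$. Then complementary slackness collapses the Lagrangian to the objective at $\textbf{x}^*$, i.e. $\phi(\textbf{x}^*) = \mathcal{L}(\textbf{x}^*,\boldsymbol{\mu}^*)$; the minimization property identifies this value with $\inf_{\textbf{x}\in\mathcal{X}}\mathcal{L}(\textbf{x},\boldsymbol{\mu}^*)$, which equals $\phi^*$ by Proposition~\ref{prob_strongdual}. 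Hence $\phi(\textbf{x}^*)=\phi^*$, and since $\textbf{x}^*$ is feasible it is a global minimum of \eqref{eq_ap_problem1}.

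For the \emph{only if} direction, I would assume $\textbf{x}^*$ is a global minimum, so it is feasible with $\phi(\textbf{x}^*)=\phi^*$, and chase the chain
\[
\phi^* = \inf_{\textbf{x}\in\mathcal{X}}\mathcal{L}(\textbf{x},\boldsymbol{\mu}^*) \le \mathcal{L}(\textbf{x}^*,\boldsymbol{\mu}^*) = \phi(\textbf{x}^*)+\sum_{j=1}^p\mu_j^* g_j(\textbf{x}^*) \le \phi(\textbf{x}^*) = \phi^*,
\]
where the first equality is Proposition~\ref{prob_strongdual}, the first inequality is the definition of the infimum, and the second inequality uses $\mu_j^*\ge 0$ and $g_j(\textbf{x}^*)\le 0$. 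Since the two ends coincide, every relation is tight: the last tight inequality forces $\sum_{j}\mu_j^* g_j(\textbf{x}^*)=0$, and because each summand is nonpositive this yields the complementary slackness $\mu_j^* g_j(\textbf{x}^*)=0$ for every $j$; the first tight inequality gives $\mathcal{L}(\textbf{x}^*,\boldsymbol{\mu}^*)=\inf_{\textbf{x}\in\mathcal{X}}\mathcal{L}(\textbf{x},\boldsymbol{\mu}^*)$, i.e. $\textbf{x}^*\in\arg\min_{\textbf{x}\in\mathcal{X}}\mathcal{L}(\textbf{x},\boldsymbol{\mu}^*)$.

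There is no genuine obstacle here: convexity and Slater's condition have already been consumed in producing the multiplier $\boldsymbol{\mu}^*$ through Proposition~\ref{prob_strongdual}, so the present statement is a purely algebraic consequence of strong duality, the sign feasibility $\boldsymbol{\mu}^*\ge\textbf{0}$, and the primal feasibility of $\textbf{x}^*$. The only points requiring care are to invoke the value equality $\phi^*=\inf_{\textbf{x}\in\mathcal{X}}\mathcal{L}(\textbf{x},\boldsymbol{\mu}^*)$ rather than the weaker bound $\varphi^*\le\phi^*$, and to observe that the pointwise sign condition $\mu_j^* g_j(\textbf{x}^*)\le 0$ upgrades the single scalar identity $\sum_j\mu_j^* g_j(\textbf{x}^*)=0$ into the $p$ separate complementary-slackness equations claimed in the statement.
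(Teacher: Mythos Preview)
Your proof is correct and is exactly the standard sandwich argument for this classical result. Note, however, that the paper does not actually supply a proof of this proposition: it appears in the Appendix as background material quoted from Bertsekas~\cite{DimitriPBertsekas1999} (Proposition~5.1.11) and is stated without proof, so there is nothing in the paper to compare against beyond observing that your argument is precisely the textbook one.
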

\subsection{Gradient projection methods}
Consider the constrained optimization problem 
\begin{equation}\label{eq_ap_problem2}
\begin{split}
\textrm{min }& \phi(\textbf{x})\\
\textrm{s.t. }& \textbf{x} \in \mathcal{X}
\end{split}
\end{equation}
where $\mathcal{X}$ is a nonempty and convex subset of $\mathcal{R}^n$ and $\phi:\mathcal{R}^n \rightarrow \mathcal{R}$ is continuously differentiable over $\mathcal{X}$.
The gradient projection method is a feasible direction method of the form 
\[\textbf{x}^{k+1} = \textbf{x}^k + \alpha^k(\bar{\textbf{x}}^k - \textbf{x}^k)\]
where $\bar{\textbf{x}}^k = [\textbf{x}^k - \epsilon^k\nabla\phi(\textbf{x}^k)]^+$.
Here, $[\cdot]^+$ denotes projection on the set $\mathcal{X}$, $\alpha^k \in (0,1]$ is a stepsize, and $\epsilon^k$ is a positive scalar. 
The simplest choice for the stepsize $\alpha^k$ and the scalar $\epsilon^k$ are $\alpha^k = 1$ and $\epsilon^k = \epsilon = const$.
Under this setup, the convergence of estimation $\textbf{x}^k$ to the stationary point $\tilde{\textbf{x}}$ where $[\tilde{\textbf{x}} - \epsilon \nabla \phi(\tilde{\textbf{x}})]^+ = \tilde{\textbf{x}}$ is sated as in Proposition \ref{prob_gradientproject}.
\begin{Proposition}[Proposition 2.3.2 \cite{DimitriPBertsekas1999}]\label{prob_gradientproject}
Let $\textbf{x}^k$ be a sequence generated by the gradient projection method with $\alpha^k = 1$ and $\epsilon^k = \epsilon$ for all $k$. 
Assume that for some constant $L > 0$, we have 
\[||\nabla \phi(\textbf{x}) - \nabla \phi(\textbf{y})|| \le L||\textbf{x} - \textbf{y}||, \forall \textbf{x}, \textbf{y} \in \mathcal{X}.\]
Then, if $0 < \epsilon < \frac{2}{L}$, every limit point of $\{\textbf{x}^k\}$ is stationary. 
\end{Proposition}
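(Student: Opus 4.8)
The plan is to prove Theorem \ref{th_distributed} by establishing that Algorithm \ref{alg_proposed_control} generates exactly the same master sequence $\{(\boldsymbol{\eta}(k), \textbf{f}(k))\}$ as the centralized Algorithm \ref{alg_Centralized}, so that its correctness is inherited from Theorem \ref{th_centralized}, and then separately verifying that every operation executed by cell $i$ reads only data from $\mathcal{N}_i^+ \cup \mathcal{N}_i^- \cup \mathcal{I}_i$. Thus the argument splits into an \emph{equivalence} part and a \emph{locality} part.

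First I would show that the inner loop correctly evaluates the maps $\boldsymbol{\eta}(k)\mapsto\boldsymbol{\zeta}^*(\boldsymbol{\eta}(k))$ and $\boldsymbol{\eta}(k)\mapsto\textbf{f}^*(\boldsymbol{\eta}(k))$. Writing $\textbf{G}=\textbf{I}-\textbf{R}$, the optimality relations \eqref{eq_relationship2} and \eqref{eq_relationship3} become the fixed-point equations $\boldsymbol{\zeta}=\textbf{R}\boldsymbol{\zeta}+\textbf{h}(\boldsymbol{\eta}(k))$ and $\textbf{f}=\textbf{R}^T\textbf{f}+\textbf{x}^0-\textbf{x}^*(\boldsymbol{\eta}(k))$. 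By Lemma \ref{lemma_turning_matrix} the spectral radius of $\textbf{R}$ (hence of $\textbf{R}^T$) is strictly below one, so the Jacobi recursions \eqref{eq_zetaUpdate} and \eqref{eq_flowUpdate} are contractions that converge to the unique solutions $\boldsymbol{\zeta}^*(\boldsymbol{\eta}(k))$ and $\textbf{f}^*(\boldsymbol{\eta}(k))$. I would then invoke the minimum-time machinery of Section~V-A to argue that these limits are extracted \emph{exactly} after finitely many steps: each recursion has the form \eqref{eq_dis_dynamics} with strictly stable $\textbf{M}=\textbf{R}$ (respectively $\textbf{R}^T$), so the final-value formula \eqref{eq_final_value}, instantiated as \eqref{eq_opt_zeta} and \eqref{eq_opt_f}, returns $\zeta_i^*(\boldsymbol{\eta}(k))$ and $f_i^*(\boldsymbol{\eta}(k))$ once the coefficient vectors $\boldsymbol{\Theta}^{(\zeta_i)},\boldsymbol{\Theta}^{(f_i)}$ are known. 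Algorithm \ref{alg_min_cal} supplies those coefficients and their indices $D_i,D_i'$ from the scalar observations alone, and running the inner recursions $D_{max}$ times guarantees each cell has collected enough samples; the denominator in \eqref{eq_final_value} is nonzero because one is not an eigenvalue of $\textbf{R}$.

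Next I would observe that the multiplier update \eqref{eq_eta_update} is precisely the componentwise rendering of \eqref{eq_centralized_multipliers}: stacking the right-hand sides of \eqref{eq_eta_update} reproduces $\max\{\textbf{0},\,\boldsymbol{\eta}(k)+\epsilon(\textbf{Q}\textbf{f}^*(\boldsymbol{\eta}(k))+\textbf{q})\}$, since each block of \eqref{eq_eta_update} equals the corresponding partial derivative listed in \eqref{eq_Lagrang_derivative2}, i.e.\ one row of $\textbf{Q}\textbf{f}^*(\boldsymbol{\eta})+\textbf{q}$. Consequently the distributed sequence $\{\boldsymbol{\eta}(k)\}$ coincides with the centralized one, and Theorem \ref{th_centralized} yields $f_i^{opt}=\lim_{k\to\infty}f_i(k)$. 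For locality I would inspect the updates in turn: \eqref{eq_zetaUpdate} reads $\zeta_j$ and, through $h_i$, the multipliers of cells in $\mathcal{N}_i^-\cup\mathcal{I}_i$; \eqref{eq_flowUpdate} reads $f_j$ for $j\in\mathcal{N}_i^+$; \eqref{eq_eta_update} reads $f_j^*$ for $j\in\mathcal{N}_i^+\cup\mathcal{I}_i$; and Algorithm \ref{alg_min_cal} uses only cell $i$'s own observation stream. All accesses therefore lie within $\mathcal{N}_i^+\cup\mathcal{N}_i^-\cup\mathcal{I}_i$, matching Assumption \ref{as_communicate}.

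The main obstacle is the exactness and well-definedness of the finite-time extraction, specifically that the minimal polynomial of the pair $[\textbf{M},\textbf{e}_n^T[r]]$ annihilates the observed difference sequence and that Algorithm \ref{alg_min_cal} detects the \emph{first} rank-deficient Hankel matrix without stopping prematurely. The excerpt already flags that a measure-zero set of initial states can trigger early rank loss, breaking \eqref{eq_opt_zeta}--\eqref{eq_opt_f}; I would close this gap with the safeguard described in the text, namely that each cell broadcasts its computed final value with a flag, checks its local equation \eqref{eq_zetaUpdate} or \eqref{eq_flowUpdate} against the neighbors' reported values, and restarts Algorithm \ref{alg_min_cal} with fresh initial states upon any violation. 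With generic initialization this yields the exact limits, so the equivalence to the centralized scheme holds and the theorem follows.
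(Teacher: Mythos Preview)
Your proposal does not address the stated proposition at all. The statement you were asked to prove is Proposition~\ref{prob_gradientproject}, a general fact about the gradient projection method: under an $L$-Lipschitz gradient and constant stepsize $0<\epsilon<2/L$, every limit point of the projected-gradient iterates is stationary. Your write-up instead argues Theorem~\ref{th_distributed}, namely that Algorithm~\ref{alg_proposed_control} reproduces the centralized sequence and uses only local data. Nothing in your text---the Jacobi inner loops, the Hankel rank detection, the locality audit of \eqref{eq_zetaUpdate}--\eqref{eq_eta_update}---bears on why $\textbf{x}^{k+1}=[\textbf{x}^k-\epsilon\nabla\phi(\textbf{x}^k)]^+$ has only stationary limit points.

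For the record, the paper itself does not supply a proof of Proposition~\ref{prob_gradientproject}; it is quoted verbatim from \cite{DimitriPBertsekas1999} as background in the appendix and then invoked in Section~IV to justify the convergence of \eqref{eq_gradientprojection}. A proof would proceed from the descent lemma $\phi(\textbf{x}^{k+1})\le\phi(\textbf{x}^k)+\nabla\phi(\textbf{x}^k)^T(\textbf{x}^{k+1}-\textbf{x}^k)+\tfrac{L}{2}\|\textbf{x}^{k+1}-\textbf{x}^k\|^2$ together with the projection inequality $(\textbf{x}^k-\epsilon\nabla\phi(\textbf{x}^k)-\textbf{x}^{k+1})^T(\textbf{x}-\textbf{x}^{k+1})\le 0$ for all $\textbf{x}\in\mathcal{X}$, which combine to give $\phi(\textbf{x}^{k+1})-\phi(\textbf{x}^k)\le -\bigl(\tfrac{1}{\epsilon}-\tfrac{L}{2}\bigr)\|\textbf{x}^{k+1}-\textbf{x}^k\|^2$; summability then forces $\|\textbf{x}^{k+1}-\textbf{x}^k\|\to 0$, and any limit point $\tilde{\textbf{x}}$ satisfies $\tilde{\textbf{x}}=[\tilde{\textbf{x}}-\epsilon\nabla\phi(\tilde{\textbf{x}})]^+$, i.e.\ is stationary. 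Your proposal contains none of this analysis. If your intent was Theorem~\ref{th_distributed}, note that the paper's own justification there is a single sentence deferring to Theorem~\ref{th_centralized}, and your outline for that theorem is broadly reasonable---but it is not what was asked.
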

\end{document}